\documentclass[10pt, twocolumn,superscriptaddress,prx]{revtex4-2}
\usepackage[utf8]{inputenc}
\usepackage[T1]{fontenc}
\usepackage{amsmath}
\usepackage{physics}
\usepackage{amsthm}
\usepackage{amsfonts}
\usepackage{amssymb}
\usepackage{bbm}
\usepackage{xcolor}
\usepackage{booktabs}   
\usepackage{tabularx} 
\usepackage{graphicx}
\usepackage[normalem]{ulem}
\usepackage{cancel}

\usepackage[colorlinks=true,linkcolor=teal,citecolor=purple]{hyperref}
\usepackage{cleveref}

\newtheorem{theorem}{Theorem}

\newtheorem{lemma}{Lemma}

\definecolor{plbcol}{rgb}{0.8,0,0}
\definecolor{mplcol}{rgb}{0.,0.8,0.}
\definecolor{gbcol}{rgb}{0,0,1}
\definecolor{ghcol}{rgb}{0.5,0.8,0.0}
\definecolor{nbcol}{rgb}{0.2,0.4,0.4}
\definecolor{jlpcol}{rgb}{0.7,0.2,0.2}

\begin{document}

\title{Thermodynamic Networks: \\ Harnessing Non-Equilibrium Steady States for Computation}
\author{Patryk Lipka-Bartosik}
\affiliation{Center for Theoretical Physics, Polish Academy of Sciences, Warsaw, Poland}
\affiliation{Institute of Theoretical Physics, Jagiellonian University, 30-348 Krak\'ow, Poland}
\affiliation{Department of Applied Physics, University of Geneva, 1211 Geneva, Switzerland}

\author{Gianmichele Blasi}
 \affiliation{Instituto de Física Interdisciplinar y Sistemas Complejos IFISC (CSIC-UIB), E-07122 Palma de Mallorca, Spain}
\affiliation{Department of Applied Physics, University of Geneva, 1211 Geneva, Switzerland}

\author{Javier Lalueza Puértolas}
\affiliation{F\'isica Te\`orica: Informaci\'o i Fen\`omens Qu\`antics, Department de F\'isica, Universitat Aut\`onoma de Barcelona, 08193 Bellaterra (Barcelona), Spain}

\author{Géraldine Haack}
\affiliation{Department of Applied Physics, University of Geneva, 1211 Geneva, Switzerland}

\author{Mart\'i Perarnau-Llobet}
\affiliation{F\'isica Te\`orica: Informaci\'o i Fen\`omens Qu\`antics, Department de F\'isica, Universitat Aut\`onoma de Barcelona, 08193 Bellaterra (Barcelona), Spain}
 \affiliation{Department of Applied Physics, University of Geneva, 1211 Geneva, Switzerland}
 
 \author{Nicolas Brunner}
  \affiliation{Department of Applied Physics, University of Geneva, 1211 Geneva, Switzerland}

\date{\today}

\begin{abstract}
We introduce thermodynamic networks, a general framework for autonomous, physics-based computation using non-equilibrium steady states. These networks are modeled as a collection of finite-size reservoirs that exchange conserved quantities--such as electric charge or molecular number--while relaxing to a non-equilibrium steady state, which encodes the solution of a computational problem. We identify Negative Differential Conductance (NDC) as the critical physical property governing the computational expressivity of the thermodynamic network. While networks lacking NDC are restricted to computing monotonic functions, the presence of NDC enables universal function approximation. For the training of the network, we use  protocols that take advantage of the natural tendency of the system to equilibrate. We illustrate the versatility of our approach via two different platforms: quantum dot networks and enzymatic reaction networks. Both systems can be engineered to have NDC, enabling high performance in standard benchmarks, including sine function approximation and MNIST digit classification. Overall, our work establishes a rigorous link between non-equilibrium steady states and computational expressivity.
\end{abstract}

\maketitle

\section{Introduction}

Physics-based computing exploits the inherent complexity of physical systems for computation. Indeed,
when a physical system relaxes to its steady state, it is effectively solving an equation. An electric circuit settles to the current distribution satisfying Kirchhoff's laws. A mixture of chemicals evolves to the composition that minimizes free energy. Protein concentrations in a living cell stabilize when synthesis and degradation balance. In each of these cases, the system evolves toward a state that satisfies a set of equations specified by its underlying physical laws. If we can design which equations the system solves, e.g. by tuning its properties and interactions, then its relaxation can be exploited as a \emph{computation}. These ideas have received renewed interest, in particular motivated by advances in machine learning and the search for energy-efficient hardware~\cite{kaspar2021rise,wright2022deep,markovic2020physics}.

A wide variety of models for physics-based computing have been developed, each exploiting a different regime of physical dynamics. For example, reservoir computing takes advantage of the nonlinear dynamics of classical or quantum systems for tasks such as time-series forecasting~\cite{Tanaka_2019}, 
while the ground states of many-body systems can be designed to encode solutions to complex optimization problems~\cite{mohseni2022ising}. More recently, the paradigm of thermodynamic computing has emerged~\cite{conte2019thermodynamic}, in which Gibbs states and stochastic dynamics~\cite{seifert2012stochastic} are used to tackle problems in linear algebra~\cite{aifer2024thermodynamic,Moroder2026}, convex optimisation~\cite{lipka2024thermodynamicquadratic} and machine learning~\cite{coles2023thermodynamic,aifer2024thermodynamicbayesianinference,whitelam2026generative,whitelam2026nonlinear,rolandi2026energytimeaccuracytradeoffsthermodynamiccomputing}. Another approach proposed a physics-based analog implementation of neural networks using open quantum systems~\cite{lipka2024thermodynamic}.

In this work, we introduce \emph{thermodynamic networks}, a general framework for physics-based computing based upon non-equilibrium steady states (NESS). As illustrated in Fig.~\ref{fig:main}, we model such a network as a collection of mesoscopic reservoirs (the nodes) characterized by thermodynamic potentials, e.g., temperature or chemical potential. These reservoirs are coupled by transport channels (the edges) which enable the exchange of conserved quantities such as energy or particles. Driven by potential differences, the network evolves to a NESS. From the latter one can extract the result of a computational problem, which had been initially encoded in the configuration of the network. The entire process is thus autonomous, governed only by dissipation without the need for a clock or an external processor.

 \begin{figure*}[!t]
\centering
\includegraphics[width=\textwidth]{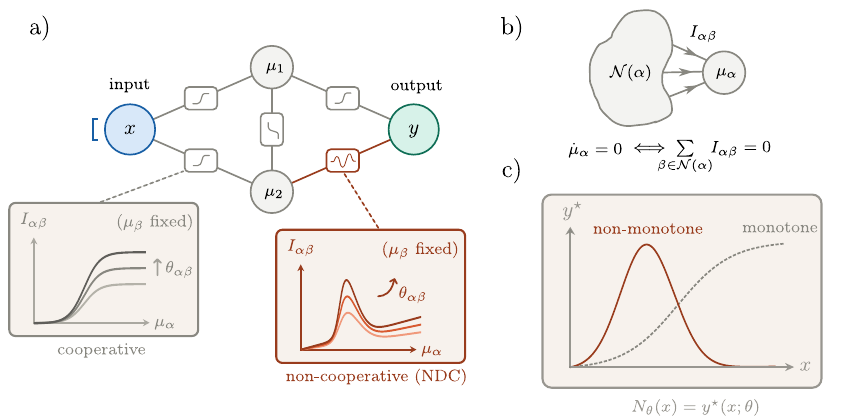}
\caption{\textbf{Thermodynamic networks compute by equilibrating.}  \textbf{(a)}~A thermodynamic network is a graph of reservoirs (nodes) coupled by transport channels (edges) that collectively redistribute a conserved charge. The conjugate potential $\mu$ serves as both the dynamical state variable at every node and the language of input and output. The input reservoir $x$ is clamped to a prescribed potential (bracket), hidden nodes~$\mu_1$ and~$\mu_2$ evolve freely, and the output~$y$ is read out at steady state. Each transport channel (rectangular box) carries a current $I_{\alpha\beta}(\mu_\alpha, \mu_\beta)$ governed by tuneable parameters~$\theta_{\alpha\beta}$. Insets show the current--potential characteristic at fixed $\mu_{\beta}$ for a family of parameter values. Cooperative channels (left) yield a monotone response, whereas a channel with negative differential conductance (right) exhibits a rise and fall profile. \textbf{(b)} Current conservation at a single node. At steady state, the potential~$\mu_\alpha$ adjusts until the currents from all neighbours~$\beta \in \mathcal{N}(\alpha)$ balance, so that~$\dot{\mu}_\alpha = 0$. \textbf{(c)} Input--output map $N_\theta(x) = y^\star(x;\theta)$ in the steady state. Cooperative networks are constrained by the Kamke--M\"uller theorem to monotone functions (dashed grey). Single NDC channel breaks this constraint, unlocking non-monotone maps (solid red) and qualitatively richer computational expressivity.}
\label{fig:main}
\end{figure*}

Our central result is that the expressivity of thermodynamic networks is governed by a single physical property: the differential conductance of the transport channels, which quantifies how currents respond to changes in the neighboring potentials. This quantity divides thermodynamic networks into three regimes of increasing expressivity. When all differential conductances are constant, currents are proportional to potential differences, and the network can only compute linear functions. When differential conductances are nonlinear but remain positive, the network is cooperative: it can represent nonlinear functions, but only monotonic ones. A qualitative transition occurs when at least one channel exhibits negative differential conductance (NDC), where increasing the potential difference decreases the current. This breaks monotonicity and makes the network a universal function approximator, capable of representing any continuous function to arbitrary precision. 

The structure of thermodynamic networks naturally suggests the use of training methods that depart from conventional machine learning. Specifically, one can exploit the steady-state condition to replace gradient computations with solving a linear system at a fixed point. This approach is closely related to Deep Equilibrium Models~\cite{bai2019deepequilibriummodels}, and allows us to transfer the toolkit of fixed-point neural networks to thermodynamic networks with only minor modifications.

To demonstrate the framework's versatility, we illustrate it via two distinct platforms. First, we consider \emph{networks of quantum dots}, where mesoscopic electronic reservoirs (the nodes) are coupled with transport channels (the edges), corresponding to quantum dots capacitively coupled to the leads. Here NDC emerges through a potential-dependent electrostatic effect, which shifts the energy level of the dot. Second, we consider \emph{enzymatic reaction networks}, where molecular pools (the nodes) exchange chemical species with the help of enzyme-catalyzed reactions (the edges). In this case, NDC arises through substrate inhibition, where excess substrate blocks the enzyme's active site. Despite their different microscopic physics, both platforms exhibit a similar activation in computational power: the presence of NDC breaks monotonicity and enables universal approximation. We verify this prediction on standard benchmarks, including XOR, sine regression, and classification on the MNIST dataset. 

\section{Thermodynamic networks}
\label{sec:model}
Under suitable conditions, such as dissipation to coupled reservoirs, physical systems driven away from equilibrium can relax to steady states in which competing currents balance out. This relaxation can be engineered so that the steady state encodes the solution to a computational problem. We now formalize this idea through the framework of \emph{thermodynamic networks}.

\subsection{Network structure and dynamics}

A \emph{thermodynamic network} is a graph $G = (V, E)$ whose nodes $V$ represent mesoscopic reservoirs and whose edges $E$ represent transport channels. Each node $\alpha \in V$ stores a conserved quantity $q_\alpha$ (e.g. particle number) and is characterized by a conjugate potential $\mu_\alpha$ (e.g. chemical potential). These variables are related via
\begin{align}
\label{eq:capacitance}
\mathrm{d} q_{\alpha} = C_{\alpha} \, \mathrm{d} \mu_{\alpha},
\end{align}
where $C_{\alpha} > 0$ is the generalized capacity of the reservoir. 

Each edge $(\alpha, \beta) \in E$ carries a current $I_{\alpha\beta}(\mu_\alpha, \mu_\beta)$ that depends on the potentials of the two connected reservoirs. The functional form of $I_{\alpha\beta}$ captures the microscopic physics of the transport channel, with the dynamics of the network being independent of its specific origin. 

It is convenient to collect the contributions of all edges meeting at node $\alpha$ into
\begin{align}
\label{eq:node_current}
I_\alpha(\mu) := \sum_{\beta\in\mathcal{N}(\alpha)} I_{\alpha\beta}(\mu_\alpha,\mu_\beta),
\end{align}
where $\mu$ is the vector of all potentials and $\mathcal{N}(\alpha)$ denotes the neighbors of $\alpha$. Definitions of charge and current imply that $\dot q_\alpha = -I_\alpha$, where negative sign means that outgoing current depletes the reservoir. Combining this with Eq.~\eqref{eq:capacitance} yields the network dynamics,
\begin{align}
\label{eq:dynamics}
C_\alpha \dot\mu_\alpha = -I_\alpha(\mu).
\end{align}
We collect all node potentials into a vector $\mu$ and write $F_\alpha(\mu) := 
-I_\alpha(\mu)/C_\alpha$, so that the network's dynamics takes the compact form $\dot\mu = F(\mu)$.

In the following, when we need to emphasize the role of tunable parameters $\theta$, we will write explicitly $F(\mu; \theta)$. Some nodes may further be clamped, meaning their potentials are held fixed, so that $\mu_{\alpha}(t) = x_{\alpha}$ for all $t$.

The currents are further assumed to respond instantaneously to changes in the potentials, which we refer to as the \emph{adiabatic regime} of a thermodynamic network. This assumption is valid when the internal relaxation time of each edge is much shorter than the node charging time. This separation of timescales is well justified in many relevant scenarios, as discussed in details in Appendix~\ref{app:microscopic}.

\subsection{Computing via equilibration}
In the following, we assume that as the network relaxes, it approaches a NESS. In this state, the time derivatives of the potentials vanish, so that
\begin{align}
    \label{eq:fixed_point}
    F(\mu^\star; \theta) = 0.
\end{align}
The steady state potential $\mu^{\star} = \mu^{\star}(x; \theta)$ that solves Eq.~\eqref{eq:fixed_point} depends on the network parameters $\theta$, and the clamped (fixed) potentials $x$ of input nodes. 

Let us now explain how the network performs computation. Consider the partition of nodes as $V = V_{\text{in}} \cup V_{\text{hid}} \cup V_{\text{out}}$, where input nodes $V_{\text{in}}$ are clamped to prescribed values, hidden nodes $V_{\text{hid}}$ are free to evolve, and output nodes $V_{\text{out}}$ are read out after the steady state is reached. The network then implements a function
\begin{align}
\label{eq:network_function}
N_\theta (x)  :=  \mu^\star_{\text{out}}(x;\, \theta),
\end{align}
which maps inputs to steady-state outputs via physical relaxation. Training the network means choosing the parameters $\theta$ so that $N_{\theta}$ approximates the desired target function. 

\subsection{Differential conductance}
Which functions can a  thermodynamic network compute? To answer this, consider the Jacobian matrix of network dynamics~\eqref{eq:dynamics}. Its elements,
\begin{align}
J_{\alpha\gamma}(\mu) = \frac{\partial F_\alpha}{\partial \mu_\gamma},   
\end{align}
describe how a change in the potential at node $\gamma$ affects the rate of change at node $\alpha$. Inserting $F_{\alpha} = -I_{\alpha}/C_{\alpha}$ in the previous equation and differentiating, we find
\begin{align}
\label{eq:jacobian}
J_{\alpha\gamma} = \frac{1}{C_\alpha}\, G_{\alpha\gamma}, 
\quad \text{where} \quad
G_{\alpha\gamma} := -\frac{\partial I_\alpha}{\partial \mu_\gamma}.
\end{align}
The matrix $G$ is the network's \emph{conductance matrix} which measures how the currents at one node respond to changes of potential at another. Furthermore, the matrix element $G_{\alpha\gamma}$ is nonzero only when $\gamma=\alpha$ or $\gamma\in\mathcal{N}(\alpha)$, so $G$ inherits the sparsity pattern of the network's graph. 

\subsection{Cooperative dynamics}
\label{sec:cooperativity}
The sign structure of the conductance matrix imposes a fundamental constraint on the class of functions that can be implemented by a thermodynamic network. 

A dynamical system $\dot{\mu} = F(\mu)$ is \emph{cooperative} if all off-diagonal elements of its Jacobian are non-negative~\cite{smith1995monotone}, namely $J_{\alpha\gamma}\ge 0$ 
for all $\alpha\neq\gamma$ and all $\mu$. By Eq.~\eqref{eq:jacobian}, this translates into a condition on the conductance matrix,
\begin{align}
\label{eq:cooperative}
G_{\alpha\gamma}(\mu) \ge 0 
\qquad \text{for all }\ \gamma\neq\alpha,\ \text{and all }\mu.
\end{align}
This is the natural multi-terminal generalisation of the fact that current flows from high to low potential. Cooperativity has important consequences, as captured by the following result from the theory of dynamical systems.

\begin{theorem}[Kamke-Müller~\cite{smith1995monotone}]
\label{thm:kamke}
If the network's dynamics $\dot{\mu} = F(\mu)$ is cooperative and $\mu(0) \leq \mu'(0)$ componentwise, then $\mu(t) \leq \mu'(t)$ for all $t > 0$.
\end{theorem}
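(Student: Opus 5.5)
The plan is the classical comparison argument for quasimonotone vector fields. We assume $F$ is $C^1$ on a convex (or order-convex) domain, as implicit in the definition of the Jacobian and conductance matrix, Eq.~\eqref{eq:jacobian}. We also assume both trajectories exist on a common interval $[0,T]$ and take values in a compact set $K$ on which $J$ is bounded. Set $w(t) := \mu'(t) - \mu(t)$, so that $w(0) \ge 0$ componentwise. The goal is to show that $w(t)$ stays in the nonnegative orthant.

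\textbf{Step 1 (linear equation for the difference).} By the mean value theorem in integral form,
\begin{align}
\dot w(t) = F(\mu'(t)) - F(\mu(t)) = A(t)\, w(t), \qquad A(t) := \int_0^1 J\big(\mu(t) + s\, w(t)\big)\, \mathrm{d}s .
\end{align}
The segment between $\mu$ and $\mu'$ lies in the domain by convexity. $A(t)$ is continuous in $t$. Cooperativity, Eq.~\eqref{eq:cooperative}, gives $A_{\alpha\gamma}(t) \ge 0$ for $\alpha \ne \gamma$, since each off-diagonal entry is an average of nonnegative numbers. So the nonlinear comparison reduces to showing that the linear time-varying system $\dot w = A(t) w$ with Metzler matrix $A(t)$ preserves the nonnegative orthant.

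\textbf{Step 2 (shift the diagonal).} Pick $\lambda \ge \sup_{t\in[0,T]} \max_\alpha |A_{\alpha\alpha}(t)|$, which is finite by compactness, and set $v(t) := e^{\lambda t} w(t)$. Then $\dot v = (A(t) + \lambda \mathbb{1})\, v =: B(t)\, v$, where $B(t)$ has \emph{all} entries nonnegative.

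\textbf{Step 3 (positivity).} Write $v$ as the Picard iteration limit
\begin{align}
v_{k+1}(t) = v(0) + \int_0^t B(s)\, v_k(s)\, \mathrm{d}s, \qquad v_0(t) = v(0) \ge 0 .
\end{align}
Each iterate is nonnegative by induction, because $B \ge 0$. The iterates converge uniformly to the unique solution on $[0,T]$. Hence $v(t) \ge 0$, so $w(t) \ge 0$, i.e. $\mu(t) \le \mu'(t)$. Clamped input nodes cause no difficulty. Their components have $\dot\mu_\alpha = 0$, and they enter the free equations only as fixed parameters in $F$. So we apply the argument to the free components with $x$ fixed, or with $x \le x'$ if one also wants monotonicity in the input, since $\partial F_\alpha/\partial x_\gamma \ge 0$ by the same off-diagonal condition.

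\textbf{Main obstacle.} The delicate point is the nonstrict inequality. A naive ``first exit time'' argument fails when $w_\alpha(t_0) = 0$ with $\dot w_\alpha(t_0) = 0$. The linearization plus Picard-positivity route above avoids this entirely, so the real work is the regularity bookkeeping: ensuring $A(t)$ is well defined, continuous and bounded on $[0,T]$. A fallback is to perturb, proving strict ordering for $\dot\mu' = F(\mu') + \varepsilon \mathbf{1}$ with $\mu'(0)+\varepsilon\mathbf{1}$ via the exit-time argument, and then let $\varepsilon\to 0$ using continuous dependence on data.
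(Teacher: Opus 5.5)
The paper gives no proof of this statement. It cites Kamke's theorem from Smith's monograph and only uses its consequence, so there is no in-paper argument to compare against.

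Your proof is correct and self-contained. The averaged Jacobian $A(t)$ is Metzler. Shifting the diagonal makes $B(t)$ entrywise nonnegative. Picard iteration from $v(0)\ge 0$ then forces $w(t)\ge 0$ on any common existence interval. One bookkeeping point: the segments $\mu(t)+s\,w(t)$ need not lie in your set $K$. They do form a compact set, as a continuous image of $[0,T]\times[0,1]$, so continuity of $J$ still makes $A$ continuous and bounded.

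The other standard route is exactly your fallback. You convert the sign condition on $J$ into the Kamke condition, $F_\alpha(a)\le F_\alpha(b)$ whenever $a\le b$ and $a_\alpha=b_\alpha$. You then perturb to $F+\varepsilon\mathbf{1}$ with initial data raised by $\varepsilon\mathbf{1}$, use a first-touching argument, and let $\varepsilon\to 0$. Compared with that route, your linearization needs more regularity: $C^1$, and convexity so the segment stays in the domain, instead of local Lipschitz continuity and p-convexity. This is harmless here because $F$ is smooth on all of $\mathbb{R}^n$.

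In return, your route gives more structure. The solution operator of $\dot w=A(t)w$ is an entrywise nonnegative matrix. That is the starting point for strong monotonicity when the conductances are strictly positive on a connected graph. Its steady-state analogue is $-(J^\star)^{-1}\ge 0$ for a Metzler Hurwitz Jacobian, which gives $\partial\mu^\star/\partial x=-(J^\star)^{-1}\,\partial F/\partial x\ge 0$ directly. This is the monotonicity of $N_\theta$ along a branch of stable steady states, stated in the same Jacobian language the paper uses for training.

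Your handling of clamped inputs, either as a nonnegative forcing term or as extra state variables with zero dynamics, is also fine.
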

The above theorem states that cooperative vector fields preserve the partial ordering of potentials $\mu(t)$ globally and for all times $t$. For thermodynamic networks this means that if two inputs satisfy $x \leq x'$ componentwise, then the corresponding steady states satisfy $\mu^\star(x) \leq \mu^\star(x')$, and the network function $N_\theta(x)$ is necessarily monotone in every input $x$.

The Kamke-Müller theorem implies that cooperative thermodynamic networks can only compute monotone functions---not by a choice of parameters, but by the structure of the dynamics itself. Therefore, no amount of training can produce a non-monotone input-output map if the network dynamics is cooperative. For certain problems this is not a limitation but a feature. For instance, in healthcare triage or credit scoring, a model must not assign lower priority to a patient with a more severe illness, or lower creditworthiness to an applicant with a higher
income~\cite{liu2020certified,runje2023constrained,chen2022monotonic}. Conventional neural networks offer no such guarantee---enforcing monotonicity typically requires either restricting the architecture (e.g. constraining all weights to be non-negative, thus making the model difficult to train) or expensive post-hoc verification~\cite{liu2020certified}. In contrast, in cooperative thermodynamic networks, monotonicity emerges from physics, in particular through transport theory and conservation laws.

An example of a cooperative thermodynamic network is the network operating in the \emph{linear transport} regime, where $I_{\alpha\gamma} = G_{\alpha\gamma}(\mu_\alpha - \mu_\gamma)$ with constant $G_{\alpha\gamma} > 0$. It can be easily shown that for such networks, the network function $N_\theta$ is also a linear map.

For general computation, however, monotonicity is a severe constraint. The exclusive-or function (XOR) is a simple example of a non-monotone function. Since for inputs $x = (1,1)$ and $x' = (0,1)$ we have $x \geq x'$ componentwise, monotonicity of $N_{\theta}$ means $N_\theta(x) \geq N_\theta(x')$, whereas XOR requires the opposite. No cooperative network can therefore compute XOR. In the next section, we discuss in detail how to overcome this limitation.

\subsection{Negative differential conductance}
\label{sec:ndc}
A non-monotone map $N_{\theta}(x)$ requires that increasing the input \emph{decreases} the output, which is precisely the behavior forbidden by Theorem~\ref{thm:kamke}. To break cooperativity,  one then needs a mechanism to violate Eq.~\eqref{eq:cooperative}. This means that at some operating point $\mu$ and some neighbour $\gamma\in\mathcal{N}(\alpha)$, we must have
\begin{align}
\label{eq:ndc_def}
G_{\alpha\gamma}(\mu) < 0.
\end{align}
This is the phenomenon of negative differential conductance (NDC)~\cite{esaki1958new,esposito2009nonequilibrium,baiesi2009fluctuations}, a generic property of nonlinear transport in which raising the potential difference across a channel decreases the current through it. It arises in a wide range of physical systems, such as semiconductor tunnel diodes~\cite{esaki1958new}, quantum dots~\cite{van2002electron} or enzymatic reactions~\cite{reed2010biological}. Although microscopic origins of NDC differ among systems, its consequences at the level of the network are universal: a single edge with $G_{\alpha \gamma} < 0$ is enough to violate the conditions of Theorem~\ref{thm:kamke} and allows the map $N_{\theta}(x)$ to reverse the ordering of inputs. Geometrically, NDC allows for \emph{folding} the input-output map (see Fig.~\ref{fig:main}c), providing the mechanism that breaks monotonicity and unlocks expressive computation. 

A striking computational feature of NDC is that the folds it enables are \emph{local}. The sign of $G_{\alpha \gamma}(\mu)$ depends on the operating point $\mu$, so the same edge can fold the map $N_{\theta}(x)$ in one region of input space, and act cooperatively in another. This is different from conventional neural network where nonlinearities are fixed by the architecture. In thermodynamic networks nonlinearities are chosen by the steady state and the input selects which edges to fold.
\subsection{Universal approximation}
\label{sec:universal}
We have seen that NDC allows a thermodynamic network to compute non-monotone functions. It is then natural to ask which functions can such a network compute. The answer turns out to be surprising: When NDC is available, a thermodynamic network can approximate any continuous function to arbitrary accuracy.

In Theorem~\ref{thm:universal} of Appendix~\ref{app:universality}, we prove that for any continous target function $f$ defined on a compact domain and any accuracy $\delta > 0$, there exists a thermodynamic network whose steady-state map $N_{\theta}(x)$ satifsfies $\sup_{x} |N_{\theta}(x) -f(x)| < \delta$. Universality of thermodynamic networks is therefore not tied to any particular platform and follows whenever the network has access to NDC, irrespectively of how this property is physically realized.   

\textit{Proof sketch.} Consider a three-layer network composed of input, hidden and output reservoirs such that the hidden nodes are only weakly coupled to the output reservoir. In this regime, the layers decouple to the leading order, so that each hidden node settles to a steady state determined only by the input nodes. These hidden nodes in turn drive the output with a small residual current. A direct calculation then shows that the steady-state potential of the output node has a form similar to the output of a single-hidden-layer neural network. In this analogy, the currents $I_{\alpha}$ play the role of effective activation functions at nodes $\alpha$ and the differential conductances $G_{\alpha \beta}$ set the effective weights of the network. A classical approximation theorem~\cite{cybenko_approximation_1989,hornik1991approximation} then guarantees that such a network can approximate any continuous function, provided its weights can take any real value. This translates into the requirement that differential conductances can take any (both positive and negative) value. 


The above sketch shows precisely how NDC influences the expressivity of thermodynamic networks. Notably, positive differential conductances can only produce positive effective weights, so a cooperative network (see Sec.~\ref{sec:cooperativity}) can realise only non-negative effective weigths and hence only monotone functions. The presence of NDC lifts this restriction and enables negative effective weights that are required for universal approximation. The mechanism that folds the input--output map (Sec.~\ref{sec:ndc}) is therefore the same mechanism that makes thermodynamic networks universal.

\section{Training thermodynamic networks}
\label{sec:training}
The map $N_\theta(x)$ implemented by a thermodynamic network depends on a set of tunable physical parameters $\theta$ that specify its action. Training the network means adjusting these parameters so that $N_\theta(x)$ accurately approximates the desired target function.

\subsection{Gradient descent}
Given a dataset $\mathcal{D} = \{(x^k, y^k)\}$ of $K$ data points, training proceeds by minimizing the empirical loss defined as
\begin{align}\label{eq:loss}
L(\theta) = \frac{1}{K}\sum_{k=1}^K \mathcal{L} \big(N_\theta(x^{k}),\, y^{k}\big),
\end{align}
where $\mathcal{L}(\cdot, \cdot)$ measures the mismatch between network's output and the target, e.g. $\mathcal{L}(a, b) = \|a - b\|^2$. The typical strategy for minimizing loss is gradient descent, which iteratively updates the parameters as
\begin{align}
    \theta \leftarrow \theta - \eta\, \nabla_\theta L,
\end{align}
with $\eta > 0$ being the learning rate. The problem of training therefore reduces to computing the gradients $\nabla_{\theta} L$.

The simplest approach to computing gradients is via finite differences. In the context of thermodynamic networks this means perturbing each parameter $\theta_i$ by a small $\varepsilon>0$ and relaxing the network to a new steady state. This yields the $i$-th component of the gradient, namely
\begin{align}
    \frac{\partial L}{\partial \theta_i}
\approx \frac{L(\theta+\varepsilon e_i) - L(\theta)}{\varepsilon},
\end{align}
where $e_i$ is the $i$-th standard basis vector. This procedure is universal, but also very expensive. For $m$ parameters and $K$ training examples, it requires $\mathcal{O}(m K)$ independent relaxations of the whole network. For most realistic problems this quickly becomes infeasible. 

In conventional neural networks, the same scaling problem is solved by \emph{backpropagation}. There, the error signal is propagated backward through well-defined layers, and a single backward pass gives gradients with respect to all parameters~\cite{rumelhart1986}. The key assumption of the method is that the network can be seen as a sequence of operations that can be unrolled and reversed. Thermodynamic networks have no such structure, since their action is defined implicitly through the steady-state equation~\eqref{eq:fixed_point}. Thus, the very feature that makes thermodynamic networks physically plausible, i.e. computation by equilibration, also obstructs standard machine learning approaches for training.

\subsection{Implicit differentiation}
\label{sec:implicit_diff}
 The steady-state equation suggests a more efficient route to compute gradients. Because $\mu^{\star}$ is defined implicitly as the solution to Eq.~\eqref{eq:fixed_point}, its derivatives with respect to $\theta$ can be computed directly using the implicit function theorem. This is the approach behind \emph{Deep Equilibrium Models} (DEQs)~\cite{bai2019deepequilibriummodels}, a class of machine learning models defined by fixed-point equations rather than by a sequence of layers. In Appendix~\ref{app:training} we adapt this training method to thermodynamic networks and sketch the basic idea below.

The starting point is to differentiate the steady-state condition from Eq.~\eqref{eq:fixed_point} with respect to parameter $\theta_k$. This yields a linear system that determines how the steady state shifts when $\theta_k$ is varied. However, solving it once for each parameter independently would not give us any advantage over the finite difference method. The crucial observation is that we do not need the full shift of the steady state. Instead, for training we only need its projection onto the loss gradient at the output. This means that all $m$ parameter gradients  can be obtained from
\begin{align}
    \label{eq:adjoint_system}
    (J^{\star})^{\top} \lambda = P^{\top} g,
\end{align}
where $g :=
\left.
\nabla_z \mathcal{L}(z,y)
\right|_{z=N_\theta(x)} $ is the loss gradient and $P$ is a projector on output nodes. Once $\lambda$ is found, every parameter gradient can be determined through
\begin{align}
\label{eq:adjoint_formula}
\frac{\partial L}{\partial \theta_k} 
= -\lambda^\top \frac{\partial F}{\partial \theta_k}\bigg|_{\mu^\star},
\end{align}
regardless of the number of parameters.

The vector $\lambda$ has an interesting physical interpretation. It measures how the loss responds to a small perturbation applied at each node and propagated through the network. It plays the role of the backpropagated error signal of conventional neural networks, but distributed not through a sequence of layers, but instead through the network's own response structure.

Implicit differentiation offers a practical computational advantage. The adjoint 
system from Eq.~\eqref{eq:adjoint_system} can be solved iteratively, for instance using conjugate-gradient methods. Furthermore, the entire procedure requires only \emph{a single} equilibration per training example, irrespective of how many parameters need to be trained. For $m$ parameters and $K$ training examples, implicit differentiation therefore costs $\mathcal{O}(K)$ relaxations of the network.

Implicit differentiation is our method of choice for numerical simulations presented in the remaining part of this work. Its main limitation is that it requires explicit access to the Jacobian matrix, which is straightforward in simulation but may be inaccessible in a physical device. This motivates the complementary approach via equilibrium propagation~\cite{scellier_equilibrium_2017}, which we discuss in Appendix \ref{app:training}. 

\section{Physical realizations of thermodynamic networks}
Having developed the abstract framework of thermodynamic networks, we now apply it to two distinct platforms: quantum dots and enzymatic reactions. A reference implementation, including the network solver, the training routines, and scripts that reproduce the results below, is available at~\cite{thermonet-code}.

\subsection{Thermodynamic networks of quantum dots}
\label{sec:quantum_dots}
As our first example we consider a network of quantum dots connecting electronic or fermionic reservoirs. Transport through such mesoscopic conductors is commonly described within the Landauer–Büttiker framework~\cite{buttiker1986four,Blasi_PRR2024,imry1999conductance}, which captures a wide range of nanoscale transport phenomena~\cite{nazarov2009quantum}. Within this setting, quantum-dot networks provide a paradigmatic yet experimentally accessible realization of thermodynamic networks.

\subsubsection{Physical setup}
The network consists of electronic reservoirs (the nodes) connected by single-level quantum dots (the edges), as shown Fig.~\ref{fig:qd_realisation}. A reservoir $\alpha$ is characterized by a chemical potential $\mu_\alpha$ and temperature $T_\alpha$. Each quantum dot sitting on the edge connecting two reservoirs $(\alpha, \beta)$ has a single discrete energy level $\epsilon_{\alpha\beta}$ through which electrons tunnel one at a time.

\begin{figure}[t]
  \centering
  \includegraphics[width=\columnwidth]{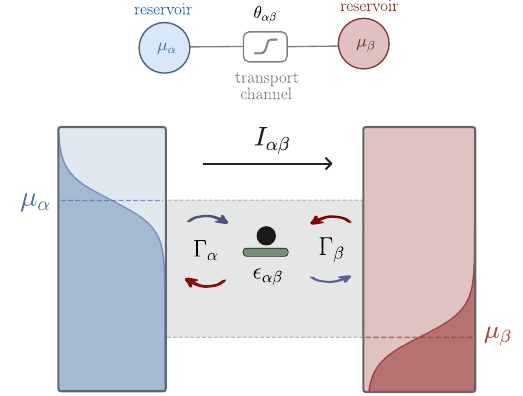}
  \caption{\textbf{Quantum-dot realisation of a thermodynamic network edge.} Physical implementation of an edge $(\alpha, \beta)$ using a quantum dot. Each reservoir is an electronic lead characterised by a chemical potential~$\mu_\ell$ and temperature~$T_\ell$. The shaded regions represent their Fermi--Dirac occupation. The dot hosts a single energy level~$\epsilon_{\alpha \beta}$ through which electrons tunnel sequentially, with rates set by the tunnel couplings $\Gamma_\alpha$ and $\Gamma_\beta$. The dot energy level further shifts with the applied bias via the electrostatic shift effect, see Eq. \eqref{eq:level_energy}. For certain choices of parameters, the energy level can be swept out of the transport window (grey shading), producing negative differential conductance---the mechanism that breaks cooperativity and enables non-monotone computation.}
  \label{fig:qd_realisation}
\end{figure}
A subset of reservoirs serves as inputs, with chemical potentials clamped to prescribed values. The remaining reservoirs exchange electrons through the quantum dots until a NESS is reached. The chemical potentials of designated output reservoirs then encode the result of the computation. The dynamics is governed by Eq.~\eqref{eq:dynamics}, with the current now specified by the microscopic physics of the quantum dot.

\subsubsection{Transport through a quantum dot}
The steady-state current through a quantum dot connecting reservoirs $\alpha$ and $\beta$ can be expressed as the weak-coupling limit of coherent transport through a single resonant level,
\begin{align}
\label{eq:landauer}
I_{\alpha\beta} = \gamma_{\alpha\beta} \big[ f_\alpha(\epsilon_{\alpha\beta}) - f_\beta(\epsilon_{\alpha\beta}) \big],
\end{align}
where $\gamma_{\alpha\beta}$ is the tunnel coupling  and
\begin{align}
f_\ell(E) = \frac{1}{1 + e^{(E - \mu_\ell)/k_B T_\ell}}
\end{align}
is the Fermi--Dirac distribution of reservoir $\ell$. Equation~\eqref{eq:landauer} provides a simple description of transport through a quantum dot: the current is set by the imbalance between the Fermi occupations of the two reservoirs evaluated at the dot level. As shown in Appendix~\ref{app_LB_to_quantumdot}, it follows directly from the Landauer--B\"uttiker formalism in the weak-coupling regime $\gamma \ll k_B T$~\cite{Datta1997,Blanter2000,van2002electron, Blasi_PRR2024}. Importantly, the Landauer--B\"uttiker formalism extends beyond the weak-coupling resonant-level model considered here. The quantum dot provides a simple and experimentally accessible realization, but the same framework applies to a broader class of coherent conductors whose bias-dependent transmission can give rise to negative differential conductance and can therefore serve as alternative implementations of thermodynamic networks.

\begin{figure*}[t]
\centering
\includegraphics[width=0.9\linewidth]{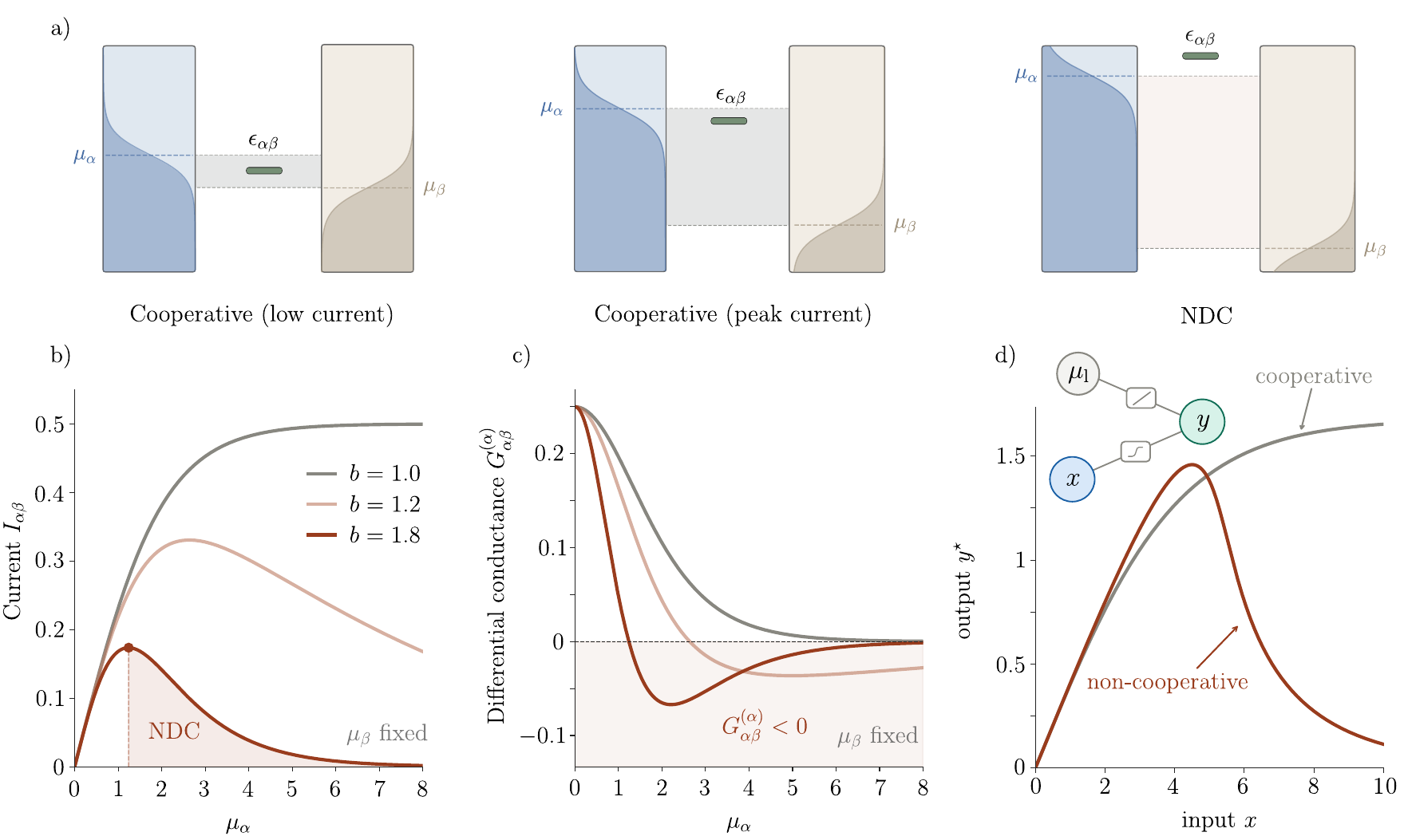}
\caption{\textbf{Negative differential conductance in a quantum dot and its computational consequences}. ~\textbf{(a)} Energy-level diagrams of a quantum dot at three operating points. At low bias $\mu_{\alpha} - \mu_{\beta}$, the dot level $\epsilon_{\alpha \beta}$ lies within the transport window between $\mu_\alpha$ and $\mu_{\beta}$, and current flows freely. As the bias increases, the level is pushed above the source chemical potential, suppressing current despite the growing driving force, which is the hallmark of negative differential conductance (NDC).~\textbf{(b)} Current–voltage characteristics for a quantum dot connected to equal temperature reservoirs $T_{\alpha} = T_{\beta}$ with a varying electrostatic shift parameter $b$. For $b=1$ (no NDC), the current increases monotonically with bias. For $b>1$, a peak develops and the current eventually decreases, leading to NDC.~\textbf{(c)} Source-side differential conductance $G_{\alpha \beta}$ as a function of $\mu_{\alpha}$ (with $\mu_{\beta}$ held fixed). In the cooperative regime $G_{\alpha \beta} > 0$, raising the drain potential suppresses the current, preserving the monotone ordering of the Kamke–Müller theorem. When NDC sets in, (i.e. when $G_{\alpha \beta} < 0$, shaded), raising the drain potential instead enhances the current, violating cooperativity and enabling non-monotone computation.~\textbf{(d)} Steady-state input-output map implemented by a minimal network (one input node $x$ coupled through a quantum dot to one output node $y$ with a linear leak $\mu_l$). The cooperative dot ($b = 1$, blue) produces a monotonically increasing input–output map. The NDC dot ($b = 1.8$, red) produces a map with a fold and enables thermodynamic networks to compute non-monotone functions.} 
\label{fig:iv_curve}
\end{figure*}

\subsubsection{NDC mechanism (electrostatic shift) }
Under certain conditions, the energy level of a quantum dot can be shifted by its electrostatic environment, in particular by the potentials $\mu_{\alpha}$ and $\mu_{\beta}$ of the two reservoirs connected to the dot. Here we model this dependence as a linear \emph{electrostatic shift}, namely
\begin{align}
\label{eq:level_energy}
\epsilon_{\alpha\beta} = a_{\alpha\beta} + b_{\alpha\beta}\,(\mu_\alpha - \mu_\beta),
\end{align}
where $a_{\alpha\beta}$ is the bare energy level and $b_{\alpha\beta}$ quantifies how strongly the dot level tracks the bias established between the two leads (nodes). This is a well-known mechanism in several nanoscale transport platforms, including quantum dots~\cite{kouwenhoven1997electron}, molecular junctions~\cite{evers2020advances}, resonant-tunnelling structures~\cite{sun1998resonant}, and related mesoscopic conductors~\cite{biele2019beyond}. A microscopic derivation of Eq.~\eqref{eq:level_energy} via the Landauer--B\"uttiker framework is given in Appendix~\ref{app_LB_to_quantumdot}.

The electrostatic shift is the mechanism through which quantum dots can break cooperativity. To see this, we compute the off-diagonal conductance matrix element associated with edge $(\alpha, \beta)$. Differentiating Eq.~\eqref{eq:landauer} with respect to $\mu_\beta$ gives (see Appendix~\ref{app:qd_conductances} for details),
\begin{align}
\label{eq:drain_conductance}
G_{\alpha\beta} = \gamma_{\alpha\beta} \Big[ \frac{1 + b_{\alpha \beta}}{T_{\beta}} \tilde{f}_\beta - \frac{b_{\alpha \beta}}{T_{\alpha}} \tilde{f}_\alpha \Big],
\end{align}
where $\tilde{f}_\ell := f_\ell(1 - f_\ell) \geq 0$ is the thermal broadening factor and $\beta_\ell = 1/k_B T_\ell$. The two terms above reflect two competing effects: the term $\tilde{f}_\beta$ captures the usual cooperative response, while the term $\tilde{f}_\alpha$ arises due to the electrostatic shift and allows for breaking cooperativity.

\subsubsection{Breaking cooperativity}
\label{sec:qd_cooperativity}
Cooperativity requires $G_{\alpha \beta} \geq 0$ for all edges and all operating points (see Sec.~\ref{sec:cooperativity}). Using Eq.~\eqref{eq:drain_conductance} it then follows that this is true if and only if
\begin{align}
\label{eq:coop_condition}
\frac{T_\beta \tilde{f}_\alpha}{T_\alpha \tilde{f}_\beta} \leq \frac{1 + b_{\alpha\beta}}{b_{\alpha\beta}}.
\end{align}
The right-hand side specifies the \emph{cooperativity threshold}, namely the maximum asymmetry between source and drain that ensures cooperative behaviour. Notably, this threshold depends on the electrostatic shift parameter $b_{\alpha \beta}$ and the two reservoir temperatures $T_{\alpha}$ and $T_{\beta}$. 

The mechanism responsible for breaking cooperativity is illustrated in Fig.~\ref{fig:iv_curve}. At low bias, the dot level $\epsilon_{\alpha\beta}$ lies between the source and drain potentials, and the current rises with bias as expected, see Fig.~\ref{fig:iv_curve}(a). At larger bias, the electrostatic shift pushes the dot level above the source potential, thereby suppressing transport even though the applied bias continues to increase.  This produces the non-monotonic current--bias relation shown in Fig.~\ref{fig:iv_curve}(b), and the corresponding sign reversal of the differential conductance shown in Fig.~\ref{fig:iv_curve}(c).


\subsubsection{A minimal example: XOR problem}
\label{sec:xor}
We now test the mechanism of NDC on the simplest non-monotone Boolean function: exclusive-or (XOR). For that we consider a network with architecture 2-2-1, i.e. two input reservoirs $(x_1, x_2)$, two hidden reservoirs $(h_1, h_2)$, and one output $(y)$, fully connected by quantum dots with parameters $\{w_{\alpha\beta}, a_{\alpha\beta}, b_{\alpha\beta}\}$. We then train the network using implicit differentiation, as discused in Sec.~\ref{sec:training}. The results are plotted in Fig.~\ref{fig:xor_network}, which clearly demonstrate the difference between cooperative and non-cooperative thermodynamic networks.

Suppose we restrict every dot to the cooperative regime by imposing Eq.~\eqref{eq:coop_condition}. Due to Theorem~\ref{thm:kamke}, the steady-state output $y^{\star}(x_1, x_2)$ is monotone in each input for any choice of parameters. The output is then forced to be monotone in both inputs by Theorem~\ref{thm:kamke}, so the level sets of $y^\star$ are necessarily monotone curves which cannot create a decision boundary separating points $(0,0)$ and $(1,1)$ from $(0,1)$ and $(1,0)$, as required by XOR. Training therefore stalls at a non-zero loss (see Fig.~\ref{fig:xor_network}a). 

\begin{figure}[t]
\centering
\includegraphics[width=\linewidth]{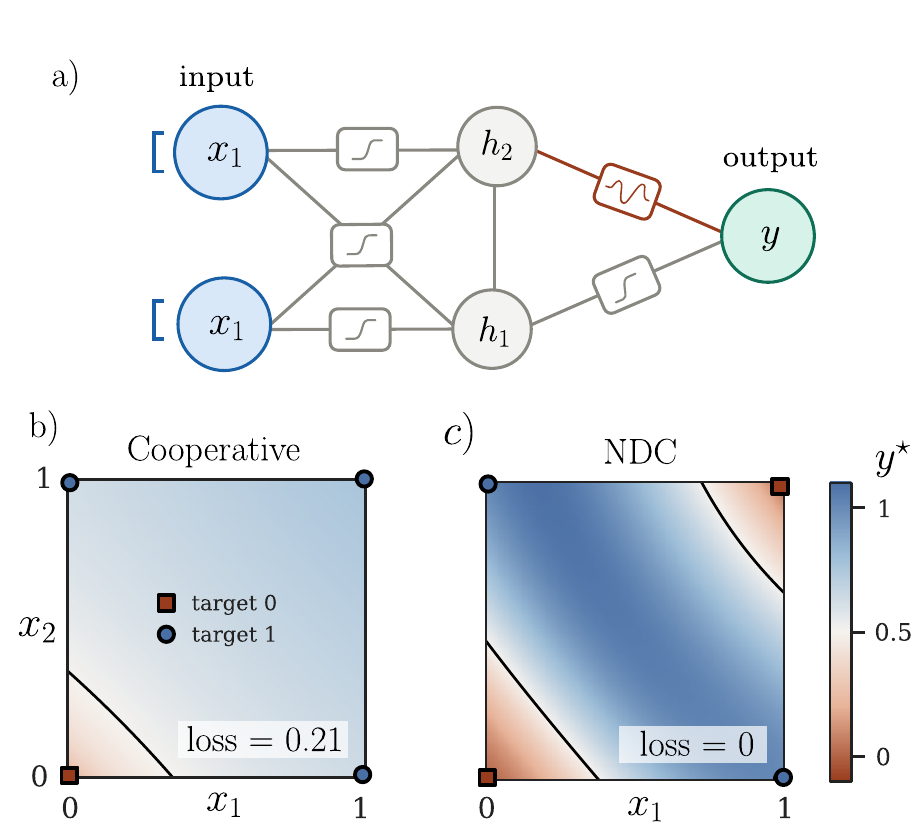}
\caption{\textbf{Quantum dot network implementing XOR.} Decision surfaces of a network of quantum dots trained on the XOR task, showing the steady-state output potential $y^{\star}$ as a function of the two input potentials $(x_1, x_2)$.~\textbf{(a)} The network structure used during training. In most runs of the numerical experiment the optimizer finds solutions with a single NDC edge~\textbf{(b)} When transport is restricted to the cooperative regime, the network cannot go beyond monotonic functions and converges to a solution with non-zero residual loss.~\textbf{(c)} With NDC enabled, the network can implement non-linear separations required by XOR. The corresponding decision boundary correctly partitions all four input points. Both networks share the same architecture and training protocol (gradient descent via the implicit value theorem). The two panels share the same architecture, optimizer, and initialization. The only difference is whether NDC is permitted.  
}
\label{fig:xor_network}
\end{figure}

In order to reach a lower error and realize the XOR function, the network must violate the assumptions of Theorem~\ref{thm:kamke}, namely the cooperative condition from Eq.~\eqref{eq:coop_condition}. In Fig.~\ref{fig:xor_network}b all quantum dots are allowed to take any value of parameters, and therefore can enter the NDC regime. As a result,  the network trained without constraints can establish clean separation between the two classes, as demonstrated by the black decision contour in Fig.~\ref{fig:xor_network}b. The interesting feature of this construction is that the network requires only a single NDC edge to correctly implement XOR, see Fig.~\ref{fig:xor_network}a for details.

\subsubsection{Scaling to image classification}
\label{sec:mnist}

The XOR example demonstrates the mechanism of folding due to NDC. We now test whether the same principle can be scaled up to realistic problems. For that we consider the MNIST handwritten digit classification benchmark~\cite{lecun2002gradient}: Classifying $28 \times 28$ grayscale images into ten digit categories ($0$--$9$).

Each input pixel is encoded as the chemical potential of an input reservoir, so that the total number of input reservoirs is $n_{\mathrm{in}} = 28^2 = 784$. The output layer consists of ten reservoirs, one per digit, with the predicted label determined by the reservoir with the highest steady-state potential. Between input and output, we place $n_h$ hidden reservoirs arranged in a single layer and connected by quantum dots with both the input and output reservoirs. 


The results of the classification are presented in Fig. \ref{fig:mnist}. There we show the performance of thermodynamic networks operating in different transport regimes as a function of the number of neurons in the hidden layer. Although the high dimensionality of the parameter space leads to challenging training scenarios where the network can converge to multistable solutions, the final networks achieve robust performance on this task.

\begin{figure}
    \centering
    \includegraphics[width=1\linewidth]{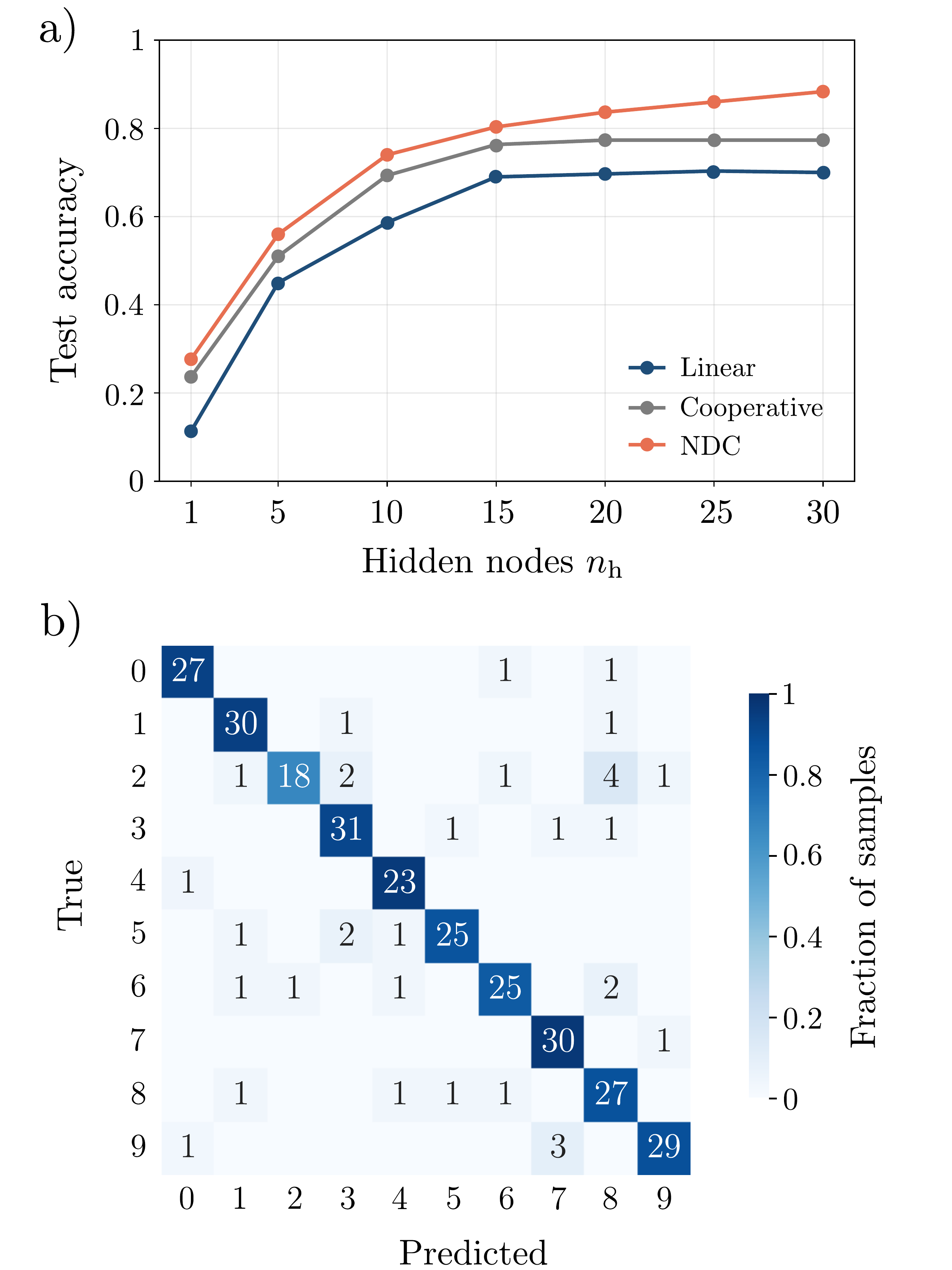}
    \caption{\textbf{Thermodynamic networks of quantum dots applied to MNIST problem (digit classification).} (a) Test accuracy of trained thermodynamic networks as a function of the number of neurons in the single hidden layer (784 input nodes, $n_{\rm h}$ hidden nodes, 10 output nodes). The curves correspond to three different regimes: linear, cooperative and NDC. All networks are trained on the MNIST dataset using gradient descent via implicit differentiation. We observe that NDC is a key ingredient for learning in this task, whereas linear and monotonic networks learn substantially worse. (b) Confusion matrix of the NDC-based thermodynamic network with $n_h = 30$ hidden nodes computed on the test set with $300$ test points and achieving $0.883$ accuracy. The diagonal shows which digits were recognized accurately, while off-diagonal entries capture which pairs of digits are hardest to distinguish.}
    \label{fig:mnist}
\end{figure}

\subsection{Enzymatic reaction networks}
\label{sec:chemical}
To illustrate the generality of thermodynamic networks, we now discuss a different physical scenario: enzymatic reaction networks (ERNs). In biology, such networks drive metabolism, energy production, and cellular
signalling~\cite{ghosh2024exploring}. Recent work has shown that engineered enzymatic systems can also perform computation, from single-layer perceptrons~\cite{pandi2019metabolic} to more complex nonlinear classification~\cite{okumura2022nonlinear}. In these demonstrations, networks of chemical reactions were explicitly designed to mimic feedforward neural network architectures using chemical components. The framework of thermodynamic networks can be applied to understand such systems, but from a fundamentally different perspective: we will show that \emph{any} reaction network exhibiting the appropriate transport physics can compute, regardless of whether it was designed to resemble a neural network or not.

In this setting, the conserved quantity is the molecular number, transport occurs via enzymatic catalysis and the NDC mechanism arises from the so-called \emph{substrate inhibition} phenomenon. The framework we developed in Sec.~\ref{sec:model} can therefore be directly applied in this case.

\subsubsection{Physical setup}
The nodes of the network are pools of molecular substrates, i.e. reservoirs that contain molecules of different chemical species with varying concentrations~$c_\alpha$. The transport between two nodes is mediated by an enzyme that catalyzes the conversion between compound~$\alpha$ into another compound~$\beta$. Each such reaction defines an edge of the network, with the enzyme playing the role of a nonlinear element coupling
the nodes and whose properties determine the
transport characteristics (see Fig.~\ref{fig:reaction_network}).

Each node $\alpha$ represents a distinct chemical compound characterized by its chemical potential,
\begin{align}
\label{eq:chem_potential_repeat}
\mu_\alpha = \mu_\alpha^\circ + k_B T \ln \frac{c_\alpha}{c_{\text{ref}}},
\end{align}
where $c_{\text{ref}}$ is the reference concentration and $\mu_{\alpha}^{\circ}$ is the potential at the reference concentration. Note that by inverting this relation, i.e. $c_\alpha = c_{\text{ref}}\,\exp[(\mu_\alpha - \mu_\alpha^\circ)/k_B T]$, any reaction flux that depends on concentrations can be expressed in terms of potentials, so that $I_{\alpha\beta}(c_\alpha(\mu_\alpha), c_\beta(\mu_\beta)) \equiv I_{\alpha\beta}(\mu_\alpha, \mu_\beta)$.

The node dynamics follows from conservation of molecular number. In a pool with a fixed volume $V$, the concentration of compound $\alpha$ changes according to the net influx from all reactions in which it participates in, that is $V\,\dot{c}_\alpha = - I_{\alpha}$. Rewriting this in terms of $\mu_\alpha$ and using the chain rule yields the same form as Eq.~\eqref{eq:dynamics} but now with a $\mu$-dependent capacity, namely
\begin{align}
\label{eq:chem_capacity_repeat}
C_\alpha(\mu_\alpha)\,\dot\mu_\alpha = -I_{\alpha}(\mu),
\end{align}
where $C_\alpha(\mu_\alpha) = V\,c_\alpha(\mu_\alpha)/k_B T$ and $I_{\alpha}$ is the net reaction flux into reservoir $\alpha$ from all neighboring chemical pools (see Eq.\eqref{eq:node_current}), similar to the current arising in networks of quantum dots discussed in Sec.~\ref{sec:quantum_dots}.

The computational \emph{inputs} are encoded in the chemical potentials of a subset of chemical pools that are held fixed. The remaining pools relax to a non-equilibrium steady state, with designated output pools encode the output. The \emph{parameters} of the network are the kinetic properties of each enzyme (e.g. conversion rates). These are set by molecular structure of the enzyme and can, in principle, be modified through techniques such as directed evolution or rational enzyme design~\cite{bornscheuer2012engineering}.

\subsubsection{Enzymatic transport}
A simple and well-studied model of reversible enzyme kinetics is the Michaelis--Menten mechanism~\cite{cornish2013fundamentals}, which expresses the reaction flux from $\alpha$
to $\beta$ as
\begin{align}
\label{eq:michaelis_menten}
I_{\alpha\beta}(c_{\alpha, c_{\beta}})
  = \frac{1}{{1 + \dfrac{c_\alpha}{K_\alpha}
            + \dfrac{c_\beta}{K_\beta}}} \left(\dfrac{V_f}{K_\alpha}\, c_\alpha
        - \dfrac{V_r}{K_\beta}\, c_\beta\right),
\end{align}
where $V_{f/r}$ are the tunable parameters that characterize the maximal forward and reverse rates, and $K_{\alpha/\beta}$, are the Michaelis constants that characterize compound binding. These parameters satisfy the analog of local detailed balance, known as Haldane relation~\cite{cornish2013fundamentals},
$V_f K_\beta / (V_r K_\alpha) = K_{\mathrm{eq}}$, where
$K_{\mathrm{eq}}$ is the equilibrium constant of the
reaction. This ensures that the flux vanishes at
equilibrium and entropy production is non-negative.

The flux in Eq.~\eqref{eq:michaelis_menten} increases
monotonically with substrate concentration $c_\alpha$ and
decreases monotonically with product concentration
$c_\beta$. A network built entirely from such reactions is
then cooperative in the sense of
Sec.~\ref{sec:cooperativity}. That is, all differential conductances are positive and the network computes only monotone functions.

\begin{figure}
\centering
\includegraphics[width=\linewidth]{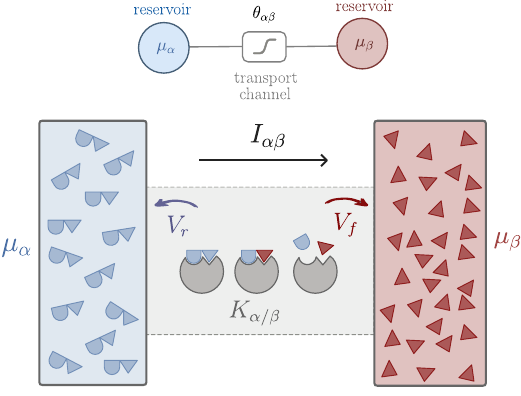}
\caption{\textbf{Enzymatic realization of a thermodynamic network edge.} The substrate and product pools act as mesoscopic reservoirs with chemical potentials $\mu_{\alpha}$ and $\mu_{\beta}$ respectively. An enzyme mediates transport between them by binding substrate molecules at its active site (with Michaelis constant $K_{\alpha}$), catalyzing their interconversion (with forward $V_f$ and reverse rate $V_r$), and releasing it into the product pool (with Michaelis constant $K_{\beta}$). The resulting Michaelis--Menten flux [Eq.~\eqref{eq:michaelis_menten}] specifies the edge current $I_{\alpha \beta}$. At high substrate concentrations, the active site of the enzyme is blocked, causing the flux to rise and then fall with concentration. This is the analogue of negative differential conductance that breaks cooperativity in thermodynamic networks.}
\label{fig:reaction_network}
\end{figure}

\subsubsection{NDC mechanism (substrate inhibition)}
To access non-monotone computation, we need a chemical analogue of NDC, namely a mechanism that makes the reaction flux respond non-monotonically to substrate concentration. This is a generic phenomenon made possible by \emph{substrate inhibition}~\cite{reed2010biological}. More specifically, at high concentrations, excess substrate molecules bind to
secondary sites on the enzyme, forming unproductive complexes that jam catalysis. As a consequence, the molecular flux rises, peaks, and then falls as substrate concentration increases. Substrate inhibition is common in biology, and affects many important metabolic enzymes~\cite{reed2010biological}.

The simplest model incorporating substrate inhibition adds an additional term to the denominator of Eq.~\eqref{eq:michaelis_menten}, see e.g. Refs. ~\cite{cornish2013fundamentals,murray2003mathematical,copeland2023enzymes}. The resulting expression for reaction flux becomes  
\begin{align}
\label{eq:substrate_inhibition}
I_{\alpha\beta}&(c_{\alpha}, c_{\beta}) 
  = \nonumber \\ 
  &\frac{1}{{1 + \dfrac{c_\alpha}{K_\alpha}
            + \dfrac{c_\beta}{K_\beta}} + \dfrac{c_\alpha^2}{K_\alpha K_i}} \left(\dfrac{V_f}{K_\alpha}\, c_\alpha
        - \dfrac{V_r}{K_\beta}\, c_\beta\right),
\end{align}
where $K_i$ is the inhibition constant characterizing the secondary binding site. The presence of the additional term $c_\alpha^2/(K_\alpha K_i)$ can be understood from the following heuristic argument. An enzyme molecule typically has a single ''slot" that can accommodate a single molecule of the substrate. After binding the enzyme, the substrate molecule is converted into the product. At high substrate concentrations, a second molecule can bind to this slot before the first one is processed. As a consequence, the reaction pathway is blocked until both molecules dissociate. Since this even requires the two substrate molecules to meet at the same enzyme, the jamming rate is proportional to $c_{\alpha}^{2}$, as opposed to the rate of the primary reaction which scales linearly with $c_\alpha$. At low substrate
concentrations ($c_\alpha \ll \sqrt{K_\alpha K_i}$), the inhibition term is negligible and the kinetics reduces to the standard Michaelis--Menten form. At high concentrations, the $c_\alpha^2$ term dominates the denominator and the flux decreases as $\sim K_i/c_\alpha$.

Let us now look at the differential conductances that enter the Jacobian. Differentiating the net flux from Eq.~\eqref{eq:substrate_inhibition} we obtain 
\begin{align}
\label{eq:chem_chainrule}
G_{\alpha\beta} = -\frac{c_\beta}{k_B T}\,\frac{\partial I_{\alpha}}{\partial c_\beta}.
\end{align}
Note that, because $c_{\ell}/(k_B T)>0$, the sign of each differential conductance equals the sign of the corresponding current derivative. 

Differentiating Eq.~\eqref{eq:substrate_inhibition} with respect to substrate concentration gives (see
Appendix~\ref{app:chem_conductances} for details),
\begin{align}
\nonumber
G_{\alpha\beta}
  &= \frac{1}{K_\alpha D_{\alpha\beta}^2}
    \Bigg[
      V_f\!\left(1 + \frac{c_\beta}{K_\beta}\right)
    + \frac{V_r\, c_\beta}{K_\beta}
      \!\left(1 + \frac{2c_\alpha}{K_i}\right) \\
\label{eq:chem_source_cond}
      &\quad\; - V_f\,\frac{c_\alpha^2}{K_\alpha K_i}
    \Bigg],
\end{align}
where $D_{\alpha\beta} = 1 + c_\alpha/K_\alpha
+ c_\beta/K_\beta + c_\alpha^2/(K_\alpha K_i)$. NDC in this model occurs when the substrate concentration $c_{\alpha}$ reaches a certain threshold value $c_\alpha^\star$.  In Appendix \ref{app:chem_conductances} we show that in the limit of small product concentration ($c_{\beta} \rightarrow 0)$, this threshold is given by 
\begin{align}
    c_\alpha^\star \approx \sqrt{K_\alpha K_i}.
\end{align}
Translating this into potentials via Eq.~\eqref{eq:chem_potential_repeat}, this gives a threshold potential
\begin{align}
    \mu_\alpha^\star \approx \mu_\alpha^\circ + k_B T \ln\!\frac{\sqrt{K_\alpha K_i}}{c_{\mathrm{ref}}},
\end{align}
which means that substrate inhibition (NDC) is reached whenever $\mu_\alpha>\mu_\alpha^\star$. In the case of arbitrary product concentration $\mu_{\beta}$, determining $\mu_{\alpha}^{\star}$ is also possible, see Appendix~\ref{app:chem_conductances} for details.

\subsubsection{Function regression}

\begin{figure}[t!]
\centering
\includegraphics[width=\linewidth]{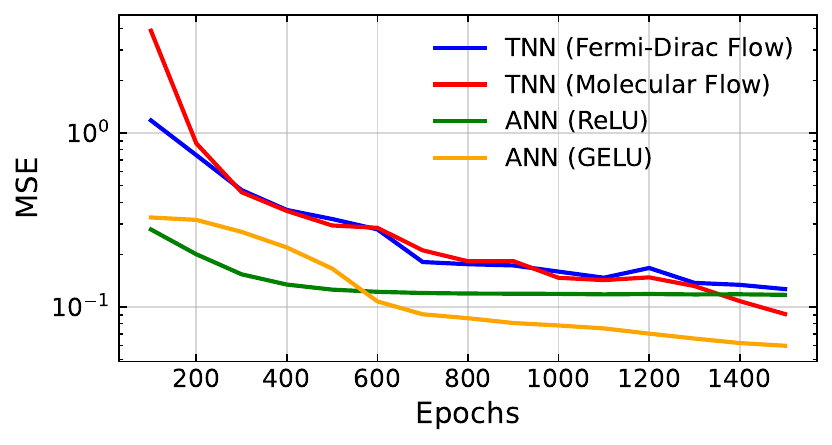}
\caption{\textbf{Performance of the different models in the function regression task.} Test loss as a function of the number of epochs for different models. All networks are trained (tested) on a 48 (12) points dataset corresponding to a function $y(x) = \exp(-x^2/10)\sin{(5x)}$ using gradient descent via implicit differentiation (TN) and backpropagation (ANN). Thermodynamic networks based on quantum dots (blue) and enzymatic reactions (red) are compared against ANNs with ReLU (green) and GELU (orange) activation functions. All models employ a 1-32-2-1 architecture. }
\label{fig:sin-square task}
\end{figure}
The numerical experiments performed on the XOR and MNIST tasks demonstrated classification with quantum networks. We now turn to regression, i.e. fitting a continuous mapping rather partitioning the input space using both presented platforms. This is a natural benchmark for thermodynamic networks, as the steady-state output potential is by default a continuous variable.  

We consider the task of approximating the function $y(x) = e^{-x^2/10} \sin(5x)$ which is non-monotone and thus requires network operating outside of the cooperative regime. The input coordinate $x$ is encoded in the potential of a single input reservoir ($n_{\rm in} = 1$) and the prediction is read as the steady-state potential of a single output reservoir. The network is arranged in a feed-forward architecture, with $2$ hidden layers, containing respectively $32$ and $2$ reservoirs.

In Fig.~\ref{fig:sin-square task} we compare the test loss of both discussed thermodynamic networks (quantum dots and enzymatic reactions) against standard ANNs with two types of activation functions (ReLU and GELU). All four models converge to similar test losses, showing that computation via thermodynamic networks is competitive with conventional neural network architectures on this task.

\section{Conclusions}

We developed a framework of thermodynamic networks, where the solution of a computation is retrieved from the NESS of the system. Notably, we identified NDC as a key physical property leading to strong expressibility and universal computation. In the absence of NDC, the network can implement non-linear but only monotonic functions, whereas full expressibility is achieved in the presence of NDC. Hence we establish a rigorous link betweeen physical transport properties and computational power.

To illustrate the relevance of our framework, we presented two possible implementations of thermodynamic networks, concretely quantum dots and enzymatic reactions. For both cases, we discussed the origin of NDC and demonstrated their potential for computing via standard benchmark tasks. In the future, it would be interesting to identify further instances of physical/chemical platforms where thermodynamic networks can be realized, and to identify specific problems for which thermodynamic networks are naturally tailored. Likewise, it would be desirable to establish links with other frameworks for physics-based computation \cite{opala2019neuromorphic,luppi2026competitive,dou2026training,li2025photonics,markovic2020physics,markovic2020quantum,Suderman2026,momeni2025training,lopezpastor2023self,mcmahon2023physics,mujal2021opportunities,opala2024roomtemperatureexcitonpolaritonneural,martinez2021dynamical}.

A natural future direction is to analyze the thermodynamics of computation with NESS by using tools from the fields of stochastic and quantum thermodynamics~\cite{seifert2012stochastic,wolpert2019stochastic,Wolpert2024,Chattopadhyay2025,deOliveiraJunior2025,Campbell2026,goold2016role,vinjanampathy2016quantum}. In such fields, links between computational power and thermodynamics have been identified at a fundamental level~\cite{Manzano2024,Meier2024Autonomous,Xuereb2026}, for logic gates using minimal models of transistors~\cite{gu2019microreversibility,wolpert2020thermodynamics,Gao2021,Freitas_2021,helms2022stochastic,Kuang2022Modelling,Klinger2026}  and recently for thermodynamic computing~\cite{rolandi2026energytimeaccuracytradeoffsthermodynamiccomputing}.  In our case, it would be interesting to analyze whether the presence of NDC comes with a higher thermodynamic cost, thus yielding a minimal energetic cost of full expressibility. Furthermore, it would be interesting to understand the role of fluctuations in our framework, where we expect that progress can be achieved by establishing connections with thermodynamic uncertainty relations~\cite{Horowitz2019}.

In the future, it would be interesting to explore further strategies for the training of the network, which ideally do not require an auxiliary digital computer (computation in-situ).  While our approach still requires solving a linear system of equations, one could imagine a training process based entirely on the dynamics of the system, in the spirit of equilibrium propagation~\cite{scellier2017equilibrium,ernoult2020equilibrium,Massar2024,wanjura2025quantum}. 

\section{Acknowledgements}
We thank Daniela Markovic, Giulia Rubino and Sparrow Suderman for discussions. The authors acknowledge funding from the Swiss State Secretariat for Education, Research and Innovation (SERI) under contract number UeM019-3. J.L.-P. and M.P.-L. acknowledge funding 
from  Spanish MICIN through the ATRAE Program (Grant ATR2024-154621 funded by MICIU/AEI/10.13039/501100011033)  as well as the project PID2024-162153NB-I00 (funded by  MICIU/AEI /10.13039/501100011033 and by  FEDER, UE). G.B. acknowledges Grant PID2023-151975NB-I00 funded by MICIU/AEI/10.13039/501100011033 and ERDF/EU. P.L.-B. acknowledges funding from Polish National Agency for Academic Exchange (NAWA) through
grant BPN/PPO/2023/1/00018/U/00001. G.H. acknowledges funding from NCCR SwissMAP as well as the Fondation Sandoz-Monique de Meuron program for academic promotion.

\bibliographystyle{apsrev4-2}
\bibliography{refs}

\onecolumngrid
\appendix

\section*{Supplementary Information}
\section{Microscopic dynamics of edges}
\label{app:microscopic}

In the main text, the edge current $I_{\alpha\beta}(\mu_\alpha, \mu_\beta)$ is treated as an instantaneous function of the adjacent potentials. Here we discuss its underlying microscopic dynamics using the formalism of master equations.

Consider an edge $e_{\alpha\beta} = (\alpha, \beta)$ as a microscopic system with discrete internal states $i \in \mathcal{I}$. The system exchanges the conserved quantity with the reservoirs located at its endpoints, hopping stochastically between states with transition rates $\Gamma_{ij}^{(\ell)}$ with $\ell \in \{\alpha, \beta\}$. The probabilities $p_i$ of occupying discrete states of the edge evolve via the master equation, 
\begin{align}
\label{eq:master}
\dot{p}_i = \sum_j \sum_{\ell \in \{\alpha, \beta\}} \big[\Gamma_{ij}^{(\ell)}\, p_j - \Gamma_{ji}^{(\ell)}\, p_i\big].
\end{align}
Assigning a value $q_i$ of the conserved quantity to each internal state, the current from reservoir $\alpha$ into the edge is
\begin{align}
\label{eq:micro_current}
I_{\alpha\beta} = \sum_{i,j} p_j^\star\, \Gamma_{ij}^{(\alpha)}(\mu_\alpha)\, (q_i - q_j),
\end{align}
where $p^\star$ is the steady-state distribution satisfying $\dot{p}_i = 0$. Conservation on the edge then ensures $I_{\alpha\beta} = -I_{\beta\alpha}$. 

The steady-state distribution $p^{\star}$ depends on $\mu_{\alpha}$ and $\mu_{\beta}$ through the rates $\Gamma_{ij}^{(\ell)}$, thus making the current a function of both potentials, $I_{\alpha \beta} = I_{\alpha \beta}(\mu_{\alpha}, \mu_{\beta})$. This functional dependence is exact, but it assumes that the edge has had time to reach its internal steady state. This is a reasonable assumption when the timescale of the edge dynamics is much shorter than the timescale of the node dynamics. Let us denote by $\tau_e$ the edge relaxation time, as set by the spectral gap of the rate matrix in Eq.~\eqref{eq:master}, and by $\tau_n$ the timescale of node dynamics set by the generalized capacitances $C_{\alpha}$. Then, this idealized description is valid in the \emph{adiabatic regime} $\tau_e \ll \tau_n$. Such a separation of timescales is common in various platforms, e.g. in quantum dot arrays where tunneling occurs on picosecond timescales while node (reservoir) charging usually takes nanoseconds \cite{hanson2007spins}.

\section{Universal Approximation Theorem for Thermodynamic Networks}
\label{app:universality}
In this Appendix we prove that thermodynamic networks can approximate any continuous function to arbitrary precision, provided the transport channels satisfy certain assumptions (Theorem~\ref{thm:universal}). The proof is constructive, i.e. we build a three-layer network whose steady-state map in an appropriate limit reduces to a single-hidden-layer neural network. Universal approximation of thermodynamic networks then follows by invoking a classical density argument. 

\subsection{Setting and assumptions}
\label{app:UAT:setup}

\noindent\textbf{Architecture.} Let $d,N\in\mathbb{N}$. Consider the following network ansatz:
\begin{itemize}
 \item $d$ \emph{input} reservoirs clamped at potentials $x=(x_1,\dots,x_d) \in K$, where $K\subset\mathbb{R}^d$ is a compact set. 
 
 \item $N$ \emph{hidden} reservoirs with unit capacity $C_{j} = 1$, each connected to a reference reservoir at a tunable potential $b_j\in\mathbb{R}$ using a linear transport channel with conductance $g_j = g_h >0$ for all $j$. 
 \item One \emph{output} reservoir of unit capacity, with constant conductance $g_o>0$ to a reference potential $b_o\in\mathbb{R}$.
 \item Every input--hidden pair $(i,j)$ is connected by an edge carrying current $I_{ij}(x_i,\mu_j;\theta_{ij})$. 
 \item Every hidden--output pair $(j,o)$ is connected by an edge carying current $\varepsilon_j\,\widetilde I_j(\mu_j,\mu_o;\phi_j)$, where $\varepsilon_j>0$ is an edge-specific coupling strength. We write $\varepsilon:=\max_j\varepsilon_j$ for the maximal coupling.
\end{itemize}
{\ \\}
\noindent\textbf{Assumptions.} To prove universality of thermodynamic networks we require the following assumptions:
\begin{description}
 \item[(P1) Arbitrary negative differential conductance] The network supports edges with tunable differential conductance, i.e. there exist physical parameters $\theta$ such that the differential conductance 
       \begin{equation*}
        G_{\alpha\gamma} := -\frac{\partial I_\alpha}{\partial \mu_\gamma}\Bigg|_{\mu'} \qquad \text{with} \quad  I_{\alpha} := \sum_{\beta \in \mathcal{N}}I_{\alpha \beta}(\mu_{\alpha}, \mu_{\beta}; \theta_{\alpha \beta})
       \end{equation*}
can take any value in $\mathbb{R}$ at any operating point $\mu'$.

\item[(P2) Antisymmetric edges] The network supports antisymmetric edges, namely
\begin{equation}
    I_{\alpha \beta}(\mu_{\alpha},\mu_{\beta})=I_{\alpha\beta}(\mu_{\alpha} - \mu_{\beta}) \qquad \text{with} \quad I_{\alpha \beta}(-V) = - I_{\alpha \beta} (V) \quad \forall \quad V\in\mathbb{R}.
       \label{eq:P2-anti}
      \end{equation}
\end{description}
{\ \\}
\noindent\textbf{Steady-state equations.} Define $F_j^0$ as the total current flowing into hidden node $j \in \{1, \ldots, N\}$ when $\varepsilon = 0$,
\begin{equation}
 F_j^{0}(x,\mu)
 \;:=\;\sum_{i=1}^{d} I_{ij}(x_i,\mu_j)\;-\;g_h(\mu_j-b_j).
 \label{eq:F0-def}
\end{equation}
Current balance at each time-dependent node then leads to the set of equations
\begin{align}
\text{(hidden)}\qquad
 F_j^{0}(x,\mu^{\star})
   &\;=\;\varepsilon_j\,\widetilde I_j(\mu_j^{\star},\mu_o^{\star}),
   \qquad j=1,\dots,N,
   \label{eq:ss-hidden}\\[2pt]
\text{(output)}\qquad
 g_o\,(\mu_o^{\star}-b_o) &\;=\;\sum_{j=1}^{N}\varepsilon_j\,\widetilde I_j(\mu_j^{\star},\mu_o^{\star}),
   \label{eq:ss-output}
\end{align}
In the case when output is uncoupled, i.e. $\varepsilon = 0$, the system of equations from Eq.~\eqref{eq:ss-hidden} splits into $N$ equations. We further denote with $\mu^{0}(x)$ the solution to the equation 
\begin{align}
    F_j^{0}(x, \mu^{0}(x))=0, \qquad \text{for all} \quad j,
\end{align}
i.e. the hidden steady state at vanishing couplings $\varepsilon_j=0$. We now proceed to the main theorem of this Appendix. 

\subsection{Theorem and Proof}
\label{app:UAT:statement}

\begin{theorem}[Universal approximation]
\label{thm:universal}
Under assumptions \emph{(P1)} and \emph{(P2)} above, for every compact
$K\subset\mathbb{R}^d$, every continuous $f\colon K\to\mathbb{R}$, and
every $\delta>0$, there exist $N\in\mathbb{N}$, $\nu > 0$ and
physical parameters $\{\theta_{ij},\phi_j,b_j\}$ such that the
steady-state map $N_\theta(x)=(\mu_o^{\star}(x)-b_o)/\nu$  satisfies
\begin{equation*}
 \sup_{x\in K}\bigl|N_\theta(x)-f(x)\bigr|\;<\;\delta.
\end{equation*}
\end{theorem}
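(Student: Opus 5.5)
The plan is to reduce the steady-state map to a single-hidden-layer neural network by perturbation in the small coupling $\varepsilon$, and then invoke the classical density theorem of Cybenko/Hornik~\cite{cybenko_approximation_1989,hornik1991approximation}.

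\textbf{Step 1: fix the target network.} Given $f$, $K$ and $\delta$, the density theorem supplies $N$, weights $a_j\in\mathbb{R}$ and a continuous non-polynomial (sigmoidal) activation of the form $\sigma(w_j\cdot x+c_j)$ with
\begin{equation*}
\sup_{x\in K}\Bigl|\sum_j a_j\,\sigma(w_j\cdot x+c_j)-f(x)\Bigr|<\delta/2 .
\end{equation*}
The task is to realise the hidden features $h_j(x)=\mu_j^0(x)$ and the readout weights $a_j$ physically.

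\textbf{Step 2: zeroth order, the hidden layer.} I would choose input edges with $\partial_{\mu_j}I_{ij}\le 0$, so that $\partial_{\mu_j}F_j^0\le -g_h<0$. Then $F_j^0(x,\mu_j)=0$ has a unique solution $\mu_j^0(x)$, which is continuous (indeed $C^1$) on $K$ by the implicit function theorem, with $|\mu_j^0|$ bounded. Using (P2), the input edges are taken as functions of $x_i-\mu_j$. Using (P1), their differential conductances can be set arbitrarily, which makes $\mu_j^0(x)$ a nonlinear, nonconstant function whose dependence on each $x_i$ has a tunable sign; the offsets $b_j$ act as biases.

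The cleanest route is to avoid needing an exact identification with a particular $\sigma$. I would apply a universality result for the family of functions that the hidden layer can actually realise. A single input edge of fixed shape $\varphi$ composed with a linear leak gives a fixed nonpolynomial ridge $x_i\mapsto\rho(x_i-b_j)$. In $d>1$, I would use the standard reduction that universality of ridge functions $\rho(w\cdot x+c)$ follows from univariate universality. Alternatively, one can arrange one hidden node per input combination, and tune the input conductances proportionally to $w_{ij}$ in a linearised regime followed by a nonlinear edge.

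\textbf{Step 3: first order, the readout.} Take $\varepsilon_j=\nu\,\alpha_j$ and expand the steady-state equations \eqref{eq:ss-hidden} and \eqref{eq:ss-output}. To leading order this gives $\mu_j^\star=\mu_j^0+O(\varepsilon)$ and $\mu_o^\star=b_o+O(\varepsilon)$, so that
\begin{equation*}
g_o(\mu_o^\star-b_o)=\nu\sum_j\alpha_j\widetilde I_j(\mu_j^0(x),b_o)+O(\nu^2).
\end{equation*}
Dividing by $\nu$ gives $N_\theta(x)=g_o^{-1}\sum_j\alpha_j\widetilde I_j(\mu_j^0(x),b_o)+O(\nu)$. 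The effective weights must have both signs. With (P2) and (P1), the sign of $\widetilde I_j$ can be flipped, either by the orientation of an antisymmetric edge or by choosing a negative-slope (NDC) edge. The magnitudes are set by $\alpha_j$, and this is exactly where NDC enters. The nonlinearity itself can be placed either in the hidden features or in $\widetilde I_j$, whichever makes matching to Step 1 easiest. I would prefer a linear-in-$\mu_j$ readout edge locally and put the nonlinearity in $\mu_j^0$.

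\textbf{Step 4: uniform control of the remainder (the main obstacle).} The hardest step is making the $O(\nu)$ error uniform over $x\in K$ and rigorous, including existence and uniqueness of the steady state near $(\mu^0(x),b_o)$. I would set up the map $\Phi(\mu,\mu_o;x,\nu)$ and note that at $\nu=0$ its Jacobian is block-diagonal with entries $\partial_{\mu_j}F_j^0\le -g_h$ and $-g_o$, hence invertible with a bound uniform in $x$. The implicit function theorem, with $C^1$ dependence and compactness of $K$, then gives a unique continuous branch on a uniform $\nu$-neighbourhood. A Lipschitz bound on $\widetilde I_j$ then yields $|N_\theta(x)-\text{target}|\le C\nu$ uniformly on $K$. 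Choosing $\nu<\delta/(2C)$ completes the proof.
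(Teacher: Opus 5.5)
There is a genuine gap, and it lies in where you put the nonlinearity. Your readout expansion and the implicit-function-theorem control of the $O(\nu)$ error are fine; the latter is in fact more careful than the paper's Lemma~\ref{lem:L2}. The problem is your plan to make the hidden steady states themselves the features, $\mu_j^0(x)\approx\sigma(w_j\cdot x+c_j)$, read out locally linearly.

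For $d\ge 2$, the hidden layer of this architecture cannot produce ridge functions with freely chosen directions $w_j\in\mathbb{R}^d$. The univariate-to-multivariate reduction you invoke therefore has nothing to act on. Two cases show this:
\begin{itemize}
\item A hidden node fed by a single input yields a function of $x_i$ alone. To leading order in $\nu$, any readout of such nodes is then additive, $\sum_i g_i(x_i)$. Additive functions are not dense in $C(K)$: the functional $h\mapsto h(1,1)-h(1,0)-h(0,1)+h(0,0)$ vanishes on all of them but equals $1$ on $x_1x_2$. Hence $x_1x_2$ stays at sup-distance at least $1/4$ from them on $[0,1]^2$.
\item A hidden node fed by several nonlinear edges has $\mu_j^0$ fixed implicitly by $\sum_i I_{ij}(x_i-\mu_j)=g_h(\mu_j-b_j)$. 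For this to be a ridge function, $\sum_i I_{ij}(s_i)$ would have to depend on $s$ only through one linear combination. A Pexider-type argument then forces the $I_{ij}$ to be affine on the relevant range, and then $\mu_j^0$ is itself affine.
\end{itemize}

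There is also a sign problem. With antisymmetric input edges $I_{ij}(x_i-\mu_j)$, your stability choice $\partial_{\mu_j}I_{ij}\le 0$ forces $\partial_{x_i}I_{ij}=-\partial_{\mu_j}I_{ij}\ge 0$. Hence $\partial_{x_i}\mu_j^0\ge 0$: the hidden layer is cooperative, and the ``tunable sign'' you invoke is unavailable. Even for $d=1$, a fixed edge shape with only $b_j$ free gives translates without dilations, and their span need not be dense.

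The missing idea is the alternative you mention only in passing: keep the hidden layer \emph{affine} and put one common nonlinearity on all output edges. The paper does this as follows:
\begin{itemize}
\item It linearises each hidden node about $x=0$ (Lemma~\ref{lem:L1}), giving $\mu_j^0(x)=\beta_j+w_j\cdot x+O(|x|^2)$, with $w_{ij}=G_{ij}/\widetilde g_j$ of either sign by (P1).
\item It sets $\widetilde I_j=\sigma$ for every $j$, so that $\mu_o^\star-b_o\approx\nu\sum_j c_j\,\sigma(w_j\cdot x+\tilde\beta_j)$, which is literally a Hornik sum.
\item It reconciles the positivity of $\varepsilon_j$ with signed coefficients through the antisymmetry of (P2), $c\,\sigma(a\cdot x+\theta)=|c|\,\sigma(\operatorname{sign}(c)(a\cdot x+\theta))$, i.e.\ by flipping hidden weights and biases.
\end{itemize}
Reversing the orientation of an antisymmetric edge, one of the mechanisms you propose, does not change the current from $j$ into $o$, so it flips nothing.

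Even on this route your Step 4 is not enough. It controls only the $O(\nu)$ coupling error, whereas the $O(|x|^2)$ linearisation error is not small on a fixed compact $K$. The paper removes it by encoding the input as $x/\Lambda$ and scaling $w_{ij}\to\Lambda w_{ij}$. This leaves $w_j\cdot x$ unchanged and makes the error, after division by $\nu$, of order $B/\Lambda$. Some such device is needed before the final $\delta$-estimate follows.
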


\noindent Before we present the proof of the theorem, we first mention a few short lemmas.   

\begin{lemma}
 \label{lem:reflection}
Under \emph{(P2)}, the output-edge current flow
$\Phi_j(z):=\widetilde I_j(z,b_o)=\widetilde I_j(z-b_o)$ satisfies
\begin{equation}
 \Phi_j(b_o+u)\;=\;-\,\Phi_j(b_o-u)
 \qquad\forall \quad u\in\mathbb{R}.
 \label{eq:reflection}
\end{equation}
\end{lemma}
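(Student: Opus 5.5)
The claim follows directly from the antisymmetry assumption (P2), so the plan is a one-line substitution rather than any analytic argument.

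By (P2), the output edge current depends only on the potential difference, so $\widetilde I_j(\mu_j,\mu_o)=\widetilde I_j(\mu_j-\mu_o)$. Moreover, the one-variable function satisfies $\widetilde I_j(-V)=-\widetilde I_j(V)$ for every $V\in\mathbb{R}$. Setting $\mu_o=b_o$ gives $\Phi_j(z)=\widetilde I_j(z-b_o)$, which is the identification written in the statement.

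Now fix $u\in\mathbb{R}$ and evaluate $\Phi_j$ at the two reflected points:
\begin{align*}
\Phi_j(b_o+u)&=\widetilde I_j\bigl((b_o+u)-b_o\bigr)=\widetilde I_j(u),\\
\Phi_j(b_o-u)&=\widetilde I_j\bigl((b_o-u)-b_o\bigr)=\widetilde I_j(-u).
\end{align*}
Applying the oddness in Eq.~\eqref{eq:P2-anti} with $V=u$ gives $\widetilde I_j(-u)=-\widetilde I_j(u)$. Hence $\Phi_j(b_o+u)=-\Phi_j(b_o-u)$, which is Eq.~\eqref{eq:reflection}. In other words, $\Phi_j$ is point-symmetric about $b_o$, and in particular $\Phi_j(b_o)=0$.

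There is no real obstacle here. The only care needed is to read (P2) as applying to the hidden--output edges $\widetilde I_j$ with parameters $\phi_j$, and not only to the input--hidden edges, since the assumption is stated for all network edges. Continuity or differentiability of $\widetilde I_j$ is not needed for this step.
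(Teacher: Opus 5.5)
Your proof is correct and is essentially the paper's argument: both evaluate $\Phi_j(b_o\pm u)=\widetilde I_j(\pm u)$ and then apply the oddness in (P2). One small reading note: (P2) says the network \emph{supports} antisymmetric edges, so the output edges are chosen to be of this type (as the identification $\widetilde I_j(z,b_o)=\widetilde I_j(z-b_o)$ in the statement already presupposes); it does not assert that every edge in the network is antisymmetric.
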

 
\begin{proof}
By Eq.~\eqref{eq:P2-anti}, we know that
$\Phi_j(b_o+u)=\widetilde I_j(u)$ and $\Phi_j(b_o-u)=\widetilde I_j(-u)$. It then follows that $\widetilde I_j(-u)=-\widetilde I_j(u)$, which then implies Eq. \eqref{eq:reflection}.
\end{proof}

\begin{lemma}
\label{lem:L1}
Let $\beta_j:=\mu_j^{0}(0)$ denote the steady-state of the hidden node $j$ when $x_i = 0$ for all $i \in \{1, \ldots, d\}$ and $\varepsilon = 0$. Assume further that the hidden node is stable there, namely
\begin{equation*}
 \widetilde g_j
 :=g_h\;-\;\sum_{i=1}^{d}
      \frac{\partial I_{ij}}{\partial\mu_j}\bigg|_{(0,\beta_j)} >0.
\end{equation*}
Then the steady-state potential of the hidden node $\mu_j^{0}(x)$ is given by
\begin{equation}
 \mu_j^{0}(x)
 =\beta_j+\sum_{i=1}^{d} w_{ij}\,x_i+O(|x|^{2}),
 \label{eq:L1-expansion}
\end{equation}
where the weights $w_{ij}$ are given by
\begin{equation}
 w_{ij}=\frac{G_{ij}}{\widetilde g_j},
 \qquad \text{with} \quad
 G_{ij}
  :=\frac{\partial I_{ij}}{\partial x_i}\bigg|_{(0,\beta_j)}.
 \label{eq:L1-weights}
\end{equation}
Each weight $w_{ij}$ is furthermore tunable to an arbitrary real value by the choice of the parameters of edge $(i,j)$.
\end{lemma}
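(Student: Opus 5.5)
The plan is to apply the implicit function theorem to the decoupled hidden-node equation $F_j^0(x,\mu_j)=0$ at the base point $(x,\mu_j)=(0,\beta_j)$. At $\varepsilon=0$, Eq.~\eqref{eq:F0-def} shows that $F_j^0$ depends on the hidden potentials only through $\mu_j$. The system therefore splits into $N$ scalar equations, and we can treat each $j$ separately.

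By the definition of $\beta_j$, we have $F_j^0(0,\beta_j)=0$. Differentiating Eq.~\eqref{eq:F0-def} gives
\begin{equation*}
\frac{\partial F_j^0}{\partial \mu_j}\bigg|_{(0,\beta_j)}=\sum_{i=1}^d \frac{\partial I_{ij}}{\partial \mu_j}\bigg|_{(0,\beta_j)}-g_h=-\widetilde g_j,
\end{equation*}
which is nonzero by the stability assumption $\widetilde g_j>0$. We assume the edge currents are $C^2$, which holds for both physical models. The implicit function theorem then yields a neighbourhood $U$ of $x=0$ and a unique $C^2$ branch $x\mapsto\mu_j^0(x)$ with $\mu_j^0(0)=\beta_j$. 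This is the branch identified with the hidden steady state. Stability also guarantees it is the attracting one, since the linearized one-dimensional dynamics $\dot\mu_j=F_j^0$ has rate $-\widetilde g_j<0$.

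Next we differentiate the identity $F_j^0(x,\mu_j^0(x))\equiv0$ with respect to $x_i$ and evaluate at $x=0$. This gives
\begin{equation*}
\frac{\partial I_{ij}}{\partial x_i}\bigg|_{(0,\beta_j)}-\widetilde g_j\,\frac{\partial\mu_j^0}{\partial x_i}(0)=0,
\end{equation*}
so that $\partial_{x_i}\mu_j^0(0)=G_{ij}/\widetilde g_j=w_{ij}$. Only the edge $(i,j)$ depends on $x_i$, which is why a single term survives. Taylor's theorem with $C^2$ remainder then gives Eq.~\eqref{eq:L1-expansion}, with the $O(|x|^2)$ term uniform on a compact neighbourhood of the origin.

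The tunability claim is the main obstacle. The difficulty is that changing $\theta_{ij}$ also moves the base point $\beta_j$ and the denominator $\widetilde g_j$. I would remove this coupling using the freedom in the reference potential $b_j$, combined with (P1).

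First, fix any target operating point $\beta_j$ and any target value of $G_{ij}$. By (P1), choose $\theta_{ij}$ so that $\partial_{x_i}I_{ij}$ at $(0,\beta_j)$ equals that target. In the current formula, $G_{ij}$ is the derivative of the current into node $j$ with respect to the neighbour potential $x_i$, which is exactly the off-diagonal conductance of (P1) up to the sign convention.

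Second, choose $b_j$ so that $F_j^0(0,\beta_j)=0$, namely
\begin{equation*}
b_j=\beta_j-g_h^{-1}\sum_i I_{ij}(0,\beta_j).
\end{equation*}
This makes $\beta_j$ the steady state without altering any derivative.

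Third, ensure $\widetilde g_j>0$. If the diagonal derivatives $\partial_{\mu_j}I_{ij}$ are not controlled by (P1), I would enlarge $g_h$, which leaves $G_{ij}$ untouched. Since $w_{ij}=G_{ij}/\widetilde g_j$ and $G_{ij}$ ranges over all of $\mathbb{R}$ while $\widetilde g_j>0$ stays fixed, $w_{ij}$ ranges over all of $\mathbb{R}$. Because the parameters of different input edges into node $j$ enter $\widetilde g_j$ additively, I would fix $\widetilde g_j$ first by choosing $g_h$ large, and then set each $G_{ij}$ independently.

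If the parameters of a single edge cannot decouple its cross derivative from its diagonal derivative, the fallback is to rescale all weights to node $j$ by a common factor through $g_h$. This is harmless for the later universality argument, since the overall scale can be absorbed into $\nu$.
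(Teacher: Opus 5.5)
Your derivation of the expansion is correct and follows the paper's argument, made rigorous. The paper Taylor-expands $F_j^0$ about $(0,\beta_j)$, uses $\partial_{\mu_j}F_j^0=-\widetilde g_j$, and solves for $\mu_j^0-\beta_j$. Your implicit-function-theorem step additionally supplies existence of the branch and control of the $O(|x|^2)$ remainder. The paper gives no argument for the last sentence of the lemma (tunability), and your attempt to supply one is where the gap lies.

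The step ``$G_{ij}$ ranges over all of $\mathbb{R}$ while $\widetilde g_j>0$ stays fixed'' is not justified. The quantity $\widetilde g_j=g_h-\sum_{i'}\partial_{\mu_j}I_{i'j}$ contains the drain-side derivative of the very edge whose parameters $\theta_{ij}$ you are varying. So you cannot fix $\widetilde g_j$ first and then set $G_{ij}$; additivity over distinct edges does not help. Also, in the paper's architecture $g_h$ is a single conductance shared by all hidden nodes, not a per-node knob.

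The fallback fails as well. The parameter $\nu$ enters only through $\eta_j=\nu c_j$, outside $\sigma$, so a node-dependent factor $s_j$ in $\sigma(s_j w_j\cdot x+\tilde\beta_j)$ cannot be absorbed into it. A factor common to all $j$ could at most be absorbed into an input rescaling, and Step 3 of the universality proof needs the weights scaled \emph{up}, not down.

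The obstruction is real, not just a presentational weakness. Take input edges that are themselves antisymmetric, $I_{ij}=I_{ij}(x_i-\mu_j)$, with slope tunable to any real value, so that (P1) and (P2) both hold. Then $\partial_{\mu_j}I_{ij}=-G_{ij}$ at every operating point, hence $\widetilde g_j=g_h+\sum_i G_{ij}$ and $\sum_i w_{ij}=1-g_h/\widetilde g_j<1$ for every admissible parameter choice. For $d=1$, no weight $w\ge 1$ is reachable, whatever you do with $b_j$ or $g_h$.

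What is missing is a hypothesis beyond (P1), which controls one conductance at a time. You need $\partial_{x_i}I_{ij}$ and $\partial_{\mu_j}I_{ij}$ to be tunable independently at the operating point, e.g.\ $G_{ij}$ varied while the drain-side derivative stays fixed or bounded. With that assumption, your $b_j$-pinning argument goes through. Alternatively, use per-node leak conductances $g_j$ in the order: pick $t>0$, set $G_{ij}=w_{ij}t$, then set $g_j=t+\sum_i\partial_{\mu_j}I_{ij}$. This works whenever that last quantity is positive, which is exactly what fails in the example above when $\sum_i w_{ij}\ge 1$.
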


\begin{proof}
The point $(x,\mu)=(0,\beta)$ solves 
$F_j^{0}(0,\beta)=0$. This is equivalent to $\beta_j$ being the solution to the equation 
\begin{equation}
 \sum_{i=1}^{d} I_{ij}(0,\beta_j)\;=\;g_h(\beta_j-b_j).
 \label{eq:L1-beta-def}
\end{equation}
Expanding $F_j^{0}$ for small $|x|$ around the point $(0, \beta)$ yields,
\begin{align}
 F_j^{0}(x,\mu) &= F_j^{0}(0,\beta) + \sum_{i=1}^{d}\left(\frac{\partial F_j^0}{\partial x_i}\Bigg|_{(0, \beta_j)}\!\! x_i \right)+ \left(\frac{\partial F_j^0}{\partial \mu_j}\Bigg|_{(0,\beta_j)} (\mu_j - \beta_j)\right) + \mathcal{O}(|x|^2) \\
 &=   F_j^{0}(0,\beta)  +\sum_{i=1}^{d} G_{ij}x_i-\widetilde g_j(\mu_j-\beta_j)+\mathcal{O}(|x|^{2}),
\end{align}
Setting $F_j^{0}=0$ at $\mu=\mu_j^{0}(x)$ and using $\widetilde g_j>0$ to
solve for $\mu_j^{0}-\beta_j$ yields Eqs.~\eqref{eq:L1-expansion}--\eqref{eq:L1-weights}. 
\end{proof}

 
\begin{lemma}\label{lem:L2}
For sufficiently small $\varepsilon$ the steady state $\mu^{\star}$ of the network satisfies
\begin{align}
 \mu_j^{\star}(x;\{\varepsilon_j\})
   &\;=\;\mu_j^{0}(x)\;+\;O(\varepsilon),
   \label{eq:L2-mu}\\[3pt]
 \mu_o^{\star}(x;\{\varepsilon_j\})
   &\;=\;b_o\;+\;\frac{1}{g_o}\sum_{j=1}^{N}
          \varepsilon_j\,\Phi_j\!\bigl(\mu_j^{0}(x)\bigr)\;+\;O(\varepsilon^{2}),
   \label{eq:L2-output}
\end{align}
where we denoted $\Phi_j(z):=\widetilde I_j(z,b_o)$.
\end{lemma}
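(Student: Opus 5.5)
The plan is to treat the full steady-state system, Eqs.~\eqref{eq:ss-hidden}--\eqref{eq:ss-output}, as a perturbation of the decoupled system at $\varepsilon=0$ and apply the implicit function theorem. I would work with the scaled couplings $\varepsilon_j=\varepsilon s_j$, where $s_j\in(0,1]$ are fixed. Define
\begin{equation*}
\mathcal{F}(\mu,\mu_o;\varepsilon):=\Bigl(F_j^{0}(x,\mu)-\varepsilon s_j\widetilde I_j(\mu_j,\mu_o)\Bigr)_{j=1}^{N}\oplus\Bigl(g_o(\mu_o-b_o)-\varepsilon\sum_j s_j\widetilde I_j(\mu_j,\mu_o)\Bigr).
\end{equation*}
At $\varepsilon=0$ this system has the solution $(\mu^{0}(x),b_o)$. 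Its Jacobian there is block lower-triangular:
\begin{itemize}
\item the hidden block is diagonal, with entries $\partial_{\mu_j}F_j^{0}=-\widetilde g_j(x)$;
\item the output entry is $g_o>0$;
\item the hidden--output coupling carries a factor $\varepsilon$ and so vanishes at $\varepsilon=0$.
\end{itemize}
The Jacobian is therefore invertible whenever each $\widetilde g_j(x)\neq 0$. I will assume $\widetilde g_j(x)>0$ on $K$, which extends the stability hypothesis of Lemma~\ref{lem:L1} from $x=0$ to the whole compact domain. If needed, this can be secured by taking $g_h$ large. Assuming $C^1$ currents, the implicit function theorem then gives a unique $C^1$ branch $(\mu^\star,\mu_o^\star)(x;\varepsilon)$ near $(\mu^0(x),b_o)$. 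From this branch, Eq.~\eqref{eq:L2-mu} follows immediately: $\mu_j^\star=\mu_j^0(x)+O(\varepsilon)$.

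For the output I would use Eq.~\eqref{eq:ss-output} directly rather than a second-order expansion:
\begin{equation*}
\mu_o^\star-b_o=\frac{1}{g_o}\sum_j\varepsilon_j\,\widetilde I_j(\mu_j^\star,\mu_o^\star).
\end{equation*}
The right-hand side is $O(\varepsilon)$ because $\widetilde I_j$ is bounded on compacts, so $\mu_o^\star-b_o=O(\varepsilon)$. I then Taylor expand $\widetilde I_j$ around $(\mu_j^0(x),b_o)$. Both displacements, $\mu_j^\star-\mu_j^0(x)$ and $\mu_o^\star-b_o$, are $O(\varepsilon)$, so
\begin{equation*}
\widetilde I_j(\mu_j^\star,\mu_o^\star)=\widetilde I_j(\mu_j^0(x),b_o)+O(\varepsilon)=\Phi_j(\mu_j^0(x))+O(\varepsilon).
\end{equation*}
Multiplying by $\varepsilon_j\le\varepsilon$ and summing over $j$ gives Eq.~\eqref{eq:L2-output} with an $O(\varepsilon^2)$ remainder.

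The main obstacle is uniformity in $x\in K$. Theorem~\ref{thm:universal} needs the $O(\cdot)$ constants to be independent of $x$, and it also needs the steady state reached by relaxation to be the perturbative branch rather than some other fixed point. I would handle this in three steps.
\begin{itemize}
\item Use compactness of $K$ together with continuity of $\mu^0$ and $\widetilde g_j$. This gives $\inf_{K}\widetilde g_j>0$ and uniform bounds on the first and second derivatives of the currents on a compact neighbourhood of the graph $\{(x,\mu^0(x),b_o)\}$.
\item Apply a quantitative implicit function theorem, i.e.\ a contraction-mapping argument on a ball of radius $c\varepsilon$ around $(\mu^0(x),b_o)$. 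This yields a single $\varepsilon_0$ and uniform constants valid for all $x\in K$.
\item For the dynamics, note that the Jacobian at $\varepsilon=0$ has negative eigenvalues $-\widetilde g_j$ and $-g_o$. By continuity, the perturbed fixed point remains locally asymptotically stable for small $\varepsilon$. The lemma is then read as a statement about this stable branch.
\end{itemize}
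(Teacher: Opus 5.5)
Your proposal is correct and follows the same route as the paper. Both arguments write the output balance $g_o(\mu_o^\star-b_o)=\sum_j\varepsilon_j\widetilde I_j(\mu_j^\star,\mu_o^\star)$, expand $\widetilde I_j$ about the decoupled point $(\mu_j^0(x),b_o)$, and note that the resulting $O(\varepsilon)$ error is multiplied by $\varepsilon_j$. The paper only asserts the $O(\varepsilon)$ shift of the hidden steady state. Your additions are the implicit-function-theorem step (the Jacobian at $\varepsilon=0$ is $\mathrm{diag}(-\widetilde g_j,g_o)$), the hypothesis $\widetilde g_j(x)>0$ on $K$, and the uniformity-in-$x$ argument; these fill in what the paper leaves implicit and what Step 3 of Theorem~\ref{thm:universal} relies on.
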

 
\begin{proof}
The steady state condition at the output reservoir reads
\begin{equation}
 g_o(\mu_o^{\star}-b_o)
 =\sum_{j=1}^{N}\varepsilon_j\,
      \widetilde I_j\!\bigl(\mu_j^{\star},\mu_o^{\star}\bigr).
 \label{eq:L2-output-balance}
\end{equation}
When we set all $\varepsilon_j=0$ we have that $\mu_j^{\star} = \mu_j^{0}$ and $\mu_o^{\star}=b_o$. Expanding around that point gives
\begin{equation*}
 \widetilde I_j\!\bigl(\mu_j^{\star},\mu_o^{\star}\bigr)
 \;=\;\Phi_j\!\bigl(\mu_j^{0}(x)\bigr)\;+\;O(\varepsilon).
\end{equation*}
Because the right-hand side of~\eqref{eq:L2-output-balance} is already prefactored by $\varepsilon_j$, this $O(\varepsilon)$ correction becomes $O(\varepsilon^{2})$ in the formula for
$\mu_o^{\star}$. Substituting and dividing by $g_o$
yields Eq.~\eqref{eq:L2-output}.
\end{proof}
We are now ready to proceed with the proof of Theorem~\ref{thm:universal}.

{\ } \\
\noindent \emph{Proof of Theorem \ref{thm:universal}.}  Let us fix the target function $f\in C(K)$ and the desired accuracy parameter $\delta>0$. The proof consists of three main steps.

{\ } \\ 
\noindent \textbf{Step 1.} Using the expansion $\mu_j^{0}(x) =\beta_j+\sum_{i=1}^{d} w_{ij} x_i + \mathcal{O}(|x|^{2})$ from Lemma~\ref{lem:L1} in Eq.~\eqref{eq:L2-output} yields 
\begin{equation}
 \mu_o^{\star}(x;\{\varepsilon_j\})
\;=\;b_o\;+\;\sum_{j=1}^{N}\eta_j\,\Phi_j\!\bigl(\beta_j+w_j\!\cdot\!x\bigr)
      \;+\;R(x;\varepsilon),
 \qquad R=O(\varepsilon^{2})+O(|x|^{2}),
 \label{eq:UAT:raw}
\end{equation}
where we denoted $\eta_j:=\varepsilon_j/g_o$ and used $R$ to denote the remainder from Taylor expansion. By property (P2), we may take tall output edges to have identical characteristic, i.e. $\tilde{I}_j = \tilde{I} =: \sigma$, so that $\Phi_j(z)= \sigma(z-b_o)$. Setting $\tilde\beta_j:=\beta_j-b_o$ our thermodynamic network realises
\begin{equation}
 \mu_o^{\star}(x;\{\varepsilon_j\})
=b_o+\sum_{j=1}^{N}\eta_j\,\sigma\bigl(w_j\cdot x+\tilde\beta_j\bigr) + R(x;\varepsilon).
 \label{eq:UAT:family-new}
\end{equation}
Note now that due to property P2 and Lemma~\ref{lem:reflection}, the function $\sigma$ can be chosen to be antisymmetric. 

{\ } \\ 
\noindent \textbf{Step 2.} Since the function $\sigma$ is antisymmetric, for any $a\in\mathbb{R}^d$,
$\theta\in\mathbb{R}$, and $c\in\mathbb{R}$ we can write
\begin{equation}
 c\,\sigma(a\!\cdot\!x+\theta)
 =|c|\,\sigma\!\bigl(\operatorname{sign}(c)\,a\!\cdot\!x+\operatorname{sign}(c)\,\theta\bigr).
 \label{eq:UAT:signflip}
\end{equation}
In other words, we can trade a negative coefficient in front of $\sigma$ with a sign flip of the argument. We now need to show that the sums of the form $\sum_{j=1}^{N}\eta_j\,\sigma(w_j\cdot x+\tilde\beta_j)$ that appear in the steady-state form in Eq.~\eqref{eq:UAT:family-new} can approximate any function $f$ arbitrarily well. This follows directly from Hornik's universal approximation theorem combined with the property from Eq.~\eqref{eq:UAT:signflip}.
 
Hornik's theorem (Theorem~2 of~\cite{hornik1991approximation}) is a classical result in the theory of artificial neural networks. In its basic form, it establishes that for any continuous, bounded and non-constant function $\sigma$, the linear combinations
\begin{equation}
  S(x) := \sum_{i=1}^{M} c_i\,\sigma(a_i\cdot x+\theta_i),
 \qquad \text{with freely tunable parameters} \quad c_i\in\mathbb{R}, \quad a_i\in\mathbb{R}^d, \quad \theta_i\in\mathbb{R},
 \label{eq:UAT:hornik-sum}
\end{equation}
are dense in $C(K)$, meaning that they can approximate any real function defined on a domain $K$. When applied to our target function $f$, Hornik's theorem ensures there exist an integer $M$ and coefficients $\{c_i,a_i,\theta_i\}_{i=1}^{M}$ for
which Eq.~\eqref{eq:UAT:hornik-sum} approximates $f$ up to error $\delta/2$ on $K$. By~\eqref{eq:UAT:signflip} we may further assume that when $\sigma$ is antisymmetric, $c_i$ can be taken to be non-negative, that is $c_i\geq 0$. 
 
Notice that in Step $1$ we established that the sum of Eq.~\eqref{eq:UAT:hornik-sum} can be realized as the steady-state of the thermodynamic network, up to the remainder $R(x, \varepsilon)$. More specifically, observe that when we set 
\begin{align}
    N = M, \quad w_j=a_j, \qquad \tilde\beta_j=\theta_j, \quad \eta_j= \nu c_j \quad \text{ (or equivalently } \varepsilon_j=\nu g_o c_j).
\end{align}
With this choice Eq.~\eqref{eq:UAT:family-new} now becomes
\begin{equation}
 \mu_o^{\star}(x) = b_o + \nu\sum_{j=1}^{M} c_j\,\sigma\bigl(a_j\cdot x+\theta_j\bigr) + R(x;\varepsilon),
 \label{eq:UAT:scaled}
\end{equation}
so that $N_\theta(x)=\bigl(\mu_o^{\star}(x)-b_o\bigr)/\nu$ reproduces the Hornik sum of Eq.~\eqref{eq:UAT:hornik-sum} up to the rescaled remainder $R(x;\varepsilon)/\nu$. Note that $\varepsilon = \nu g_o \max_j c_j$ is now controlled by the single parameter $\nu$. It remains to show that $R(x;\varepsilon)/\nu$ can be made arbitrarily small. 

{\ } \\ 
\noindent \textbf{Step 3.} It remains to bound the remainder $R(x;\varepsilon)$ and show that it vanishes. For convenience we split it as
\begin{align}
    R(x;\varepsilon) = R_x(x) + R_{\varepsilon}(x),
\end{align}
where $R_{\varepsilon}$ collects the correction due to a finite coupling $\varepsilon$ (Lemma \ref{lem:L2}), and $R_x$ the correction due to the linearization at the hidden layer (Lemma~\ref{lem:L1}). Throughout we assume that the output characteristic $\tilde{I}$ is bounded with Lipschitz constant $L$. 
{\ } \\

\noindent \emph{Bounding term $R_{\varepsilon}$.} Let us start from Eq.~\eqref{eq:ss-output} which after plugging $\varepsilon_j=\nu g_o c_j$ reads
\begin{equation}
 \mu_o^{\star}(x) = b_o + \nu\sum_{j=1}^{N} c_j\,\widetilde I_j\bigl(\mu_j^{\star},\mu_o^{\star}\bigr).
 \label{eq:UAT:exact-out}
\end{equation}
By Lemma~\ref{lem:L2} we know that switching on the coupling (i.e. taking $\varepsilon > 0$) shifts the steady-state by $O(\varepsilon)$, that is, there exist a constant $C$ such that 
\begin{align}
    |\mu_j^{\star} - \mu_j^{0}(x)| \leq C \nu \qquad \text{and} \qquad |\mu_o^{\star} - b_o| \leq C \nu \qquad \text{for all} \quad x \in K.
\end{align}
Therefore, replacing the true steady-state $(\mu_j^{\star}, \mu_o^{\star})$ by the reference one $(\mu_j^0(x), b_o)$ introduces error
\begin{align}
    |\tilde{I}(\mu_j^{\star}, \mu_o^{\star}) - \tilde{I}(\mu_j^0(x), b_0)| \leq L (|\mu_j^{\star} - \mu_j^0(x)| + |\mu_o^{\star} - b_o|) \leq 2 L C \nu, 
\end{align}
where we used Eq.~\eqref{eq:UAT:exact-out} and the Lipshitz property of $\tilde{I}$. 

Using our notation from before we have $\tilde{I}(\mu_j^0(x), b_o) = \sigma(\mu_j^0(x) - b_o)$. Inserting this into Eq.~\eqref{eq:UAT:exact-out}, we obtain 
\begin{align}
 \mu_o^{\star}(x)-b_o
 = \nu\sum_{j=1}^{N} c_j\,\sigma\bigl(\mu_j^{0}(x)-b_o\bigr) + R_\varepsilon(x),
 \qquad
\sup_{x \in K} |R_\varepsilon(x)| \le A \nu^2,
\end{align}
where $A := 2 L C \sum_{j=1}^N c_j$ is independent of $\nu$ and of $x\in K$.
{\ \\}

\noindent \emph{Bounding term $R_{x}$.} The second remainder arises due to the error from replacing $\sigma(\mu_j^0(x) - b_o)$ by $\sigma_{w_j \cdot x + \tilde{\beta}_j}$, since by Lemma \ref{lem:L1} we have  $\mu_j^{0}(x)-b_o = w_j\cdot x+\tilde\beta_j + O(|x|^2)$. This linearization is accurate only for small inputs, while in general the actual domain of the input $K$ is fixed. To resolve it, we rescale the inputs: we replace $x \rightarrow x/\Lambda$ and $w_{ij} \rightarrow \Lambda w_{ij}$ for some $\Lambda \geq 1$. This leaves the linear term $w_j\cdot x$ unchanged, while shrinking the input set to $K/\Lambda$. Bounding this correction therefore leads to
\begin{align}
    \sup_{x \in K} |R_x(x)| \leq \frac{B}{\Lambda} \nu,
\end{align}
where $B$ is some constant independent from $\nu$ and $\Lambda$. Note we can take $\Lambda$ as large as desired, thus making the correction term arbitrarily small. 
{\ \\ }

\noindent \emph{Combining both bounds.} Let us now collect both contributions, namely
\begin{align}
     \mu_o^{\star}(x)-b_o
 = \nu\sum_{j=1}^{N} c_j\,\sigma\bigl(w_j \cdot x  + \tilde{\beta}_j\bigr) + R_\varepsilon(x;\varepsilon),
 \qquad
\sup_{x \in K} |R_\varepsilon(x; \varepsilon)| \le A \nu^2 + \frac{B}{\Lambda} \nu, 
\end{align}
After dividing the above equation by $\nu$, the function $N_{\theta}(\mu_o^{\star}(x) - b_o)/\nu$ reproduces the Hornik sum $S(x)$ up to an error $|R(x;\theta)|\nu \leq A \nu + B/\Lambda$. Combining this with our estimate from Step 2 we obtain,
\begin{align}
    \sup_{x \in K} |N_{\theta}(x) - f(x)| \leq \sup_{x \in K} |S(x) - f(x)| + A \nu + \frac{B}{\Lambda} \leq \frac{\delta}{2} + A \nu + \frac{B}{\Lambda}.
\end{align}
Choosing $\nu < \delta / (4A)$ and then $\Lambda > 4B/\delta$ makes the right-hand smaller than $\delta$, which completes the proof of Theorem~\ref{thm:universal}.
\section{Training thermodynamic networks}
\label{app:training}

\subsection{Implicit Differentiation}
\label{app:implicit_differentiation}
The starting point is to differentiate the steady-state condition 
$F(\mu^\star; \theta) = 0$ with respect to a parameter $\theta_k$. 
Using the chain rule, this gives
\begin{align}
\label{eq:sensitivity}
J^\star \frac{\partial \mu^\star}{\partial \theta_k}
   = -\frac{\partial F}{\partial \theta_k}\bigg|_{\mu^\star},
\end{align}
where $J^\star$ is the Jacobian of the network dynamics evaluated at its fixed point. Equation~\eqref{eq:sensitivity} is a linear system 
that determines how the steady state shifts when the parameter $\theta_k$ is varied. 

The crucial observation is that we do not need to compute the full sensitivity matrix $\partial \mu^\star / \partial \theta_k$. We only need its projection 
onto the loss gradient at the output, since
\begin{align}
\label{eq:gradient_projection}
\frac{\partial L}{\partial \theta_k} 
  = \sum_{\alpha} \frac{\partial L}{\partial \mu_{\alpha}^{\star}} \frac{\partial \mu_{\alpha}^\star}{\partial \theta_k} = \left(\frac{\partial L}{\partial \mu^{\star}}\right)^{\top} \frac{\partial \mu^\star}{\partial \theta_k} = g^\top P \,\frac{\partial \mu^\star}{\partial \theta_k},
\end{align}
where $g :=
\left.
\nabla_z \mathcal{L}(z,y)
\right|_{z=N_\theta(x)} $  is the loss gradient 
at the output node and $P$ projects onto the output nodes. Combining 
Eqs.~\eqref{eq:sensitivity} and~\eqref{eq:gradient_projection}, we 
can rewrite the parameter gradient as
\begin{align}
\frac{\partial L}{\partial \theta_k} 
   = -\,g^\top P (J^\star)^{-1}\, 
       \frac{\partial F}{\partial \theta_k}\bigg|_{\mu^\star}.
\label{eq:lk_training}
\end{align}
Notice now that in Eq.~\eqref{eq:lk_training} only the rightmost 
factor $\partial F/\partial \theta_k$ depends on which parameter we 
differentiate. Everything to its left, namely the row vector 
$g^\top P (J^\star)^{-1}$, is the same for every $\theta_k$. We can 
therefore compute it \emph{once} by solving a single adjoint 
equation, that is
\begin{align}
\label{eq:adjoint_system_app}
(J^\star)^\top \lambda = P^\top g,
\end{align}
and then read off all parameter gradients through cheap contractions,
\begin{align}
\label{eq:adjoint_formula2}
\frac{\partial L}{\partial \theta_k} 
   = -\,\lambda^\top \frac{\partial F}{\partial \theta_k}\bigg|_{\mu^\star}.
\end{align}
One linear solve replaces the $m$ separate solves that finite  differences would require.

\subsection{Equilibrium Propagation}
\label{app:eq_prop}
The second strategy we discuss is known as equilibrium propagation (EP)~\cite{scellier_equilibrium_2017}. This is a particularly appealing approach for physical implementations, because it replaces the linear algebra of the adjoint method with a second physical relaxation. Where implicit differentiation solves for the adjoint vector $\lambda$ using an external linear solver, EP lets the network itself perform the equivalent computation through its own dynamics.

The protocol proceeds in two phases. In the \emph{free phase}, the network relaxes to its unperturbed steady state $\mu^0$ with the input nodes clamped to the training input $x$. In the \emph{nudged phase}, a weak external force is applied at the output nodes, gently pushing them toward lower loss. Concretely, the perturbed steady state $\mu^\alpha$ satisfies
\begin{align} \label{eq:ep_perturbed}
F(\mu^\alpha;\, \theta) = \alpha\, P^\top g^\alpha,
\end{align}
where $g^{\alpha}$ is now the loss gradient evaluated at the nudged output and $0 < \alpha \ll 1$ controls the nudge strength. Physically, the right-hand side represents a small external current injected at the output nodes, biasing them toward the target. The system then relaxes to a new steady state $\mu^{\alpha}$ in which these external currents are balanced by the internal transport in the network.

The gradient information is encoded in the difference between the two steady states. Define the rescaled shift $\delta\mu := (\mu^\alpha - \mu^0)/\alpha$. Expanding Eq.~\eqref{eq:ep_perturbed} to first order in $\alpha$ around the free-phase steady state and using $F(\mu^0;\theta)=0$ gives
\begin{align} \label{eq:ep_linearised}
J^\star \delta\mu = P^\top g, \qquad \text{with} \quad J^\star := \frac{\partial F}{\partial \mu}\bigg|_{\mu^\star}
\end{align}
which is precisely the linear system that implicit differentiation solves algebraically, with $\delta \mu$ playing the role of $\lambda$. Here the network solves it instead through physical relaxation. The EP gradient estimator then takes the form 
\begin{align} \label{eq:ep_estimator}
\left[\frac{\partial L}{\partial \theta_k}\right]_{\text{est}} = -(\delta\mu)^\top \frac{\partial F}{\partial \theta_k}\bigg|_{\mu^\star},
\end{align}
where $[\cdot]_{\text{est}}$ denotes the estimator of the gradient. 

To compare this with the exact gradient, let us recall that implicit differentiation of the steady-state condition (see Eq.~\eqref{eq:lk_training}) yields
\begin{align} \label{eq:exact_grad}
\frac{\partial L}{\partial \theta_k}
   = -\,g^\top P \,(J^\star)^{-1}\,\frac{\partial F}{\partial \theta_k}\bigg|_{\mu^\star}.
\end{align}
Substituting $\delta\mu = (J^\star)^{-1} P^\top g$ from Eq.~\eqref{eq:ep_linearised} into Eq.~\eqref{eq:ep_estimator}, the estimator can be written in the same form, namely
\begin{align} \label{eq:ep_estimator_explicit}
\left[\frac{\partial L}{\partial \theta_k}\right]_{\text{est}}
   = -\,g^\top P \,(J^\star)^{-\top}\,\frac{\partial F}{\partial \theta_k}\bigg|_{\mu^\star}.
\end{align}
The exact gradient and the EP estimator therefore differ only in how the Jacobian matrix enters. Eq.~\eqref{eq:exact_grad} involves $(J^\star)^{-1}$, whereas Eq.~\eqref{eq:ep_estimator_explicit} uses $(J^\star)^{-\top}$. The estimator is exact if and only if $(J^\star)^{-\top} = (J^\star)^{-1}$, that is, whenever the Jacobian $J^\star$ is symmetric.

When the network dynamics derives from a potential, that is when $F = -\nabla_\mu \Phi$, the Jacobian $J^\star = -\nabla^2_\mu \Phi$ is symmetric and negative definite at a stable steady state. In this case we have $(J^\star)^{-\top} = (J^\star)^{-1}$, and the EP estimator matches the exact gradient. The EP gradient estimator then becomes equal to the true gradient and the formula simplifies to the canonical result of Ref.~\cite{scellier_equilibrium_2017}, namely
\begin{align} \label{eq:ep_canonical}
\frac{\partial L}{\partial \theta_k} = \lim_{\alpha \to 0} \frac{1}{\alpha}\Big[\frac{\partial \Phi}{\partial \theta_k}\Big|_{\mu^\alpha} - \frac{\partial \Phi}{\partial \theta_k}\Big|_{\mu^0}\Big].
\end{align}
This requires evaluating the potential at two steady states, without explicitly computing the Jacobian or solving a linear system. This is the setting in which EP was originally formulated, and where it is most natural.

Thermodynamic networks with asymmetric transport laws typically have a non-symmetric Jacobian. Decomposing $(J^\star)^{-1}$ into symmetric and antisymmetric parts, $(J^\star)^{-1} = S + A$, one can realize that the EP estimator captures only the symmetric part $S$. The corresponding bias is proportional to $A$ and reflects the rotational (non-conservative) component of the vector field. When this component is small, the EP estimator provides a good approximation. However, for strongly non-conservative dynamics, one can has to make some sacrifices: either accept the bias and use an approximate training signal, employ (computationally expensive) correction schemes that estimate the antisymmetric contribution~\cite{laborieux2022holomorphic}, or fall back to implicit differentiation. 
\section{From Landauer--B\"uttiker transport to the quantum-dot current law}
\label{app_LB_to_quantumdot}

In this appendix we show how the current used in Eq.~\eqref{eq:landauer} of the main text follows from the Landauer--B\"uttiker description of coherent transport. We also show how, for a single-level quantum dot, a bias-dependent transmission arises microscopically when the dot level shifts under the electric field generated by the potential drop between the two leads. This mechanism provides the microscopic origin of the negative differential conductance discussed in Sec.~\ref{sec:quantum_dots}.

\subsection{Landauer--B\"uttiker transport and bias-dependent transmission}
In the Landauer--B\"uttiker description, transport through a mesoscopic conductor is viewed as coherent scattering between electronic reservoirs. For a two-terminal device, the steady-state current reads~\cite{Datta1997,Blanter2000}
\begin{align}
I_{\alpha\beta}
=
\int \frac{dE}{2\pi}\,
T(E,V)
\left[
f_\alpha(E)-f_\beta(E)
\right],
\end{align}
where $f_\ell(E)$ is the Fermi--Dirac distribution of reservoir $\ell$, and $T(E,V)$ is the transmission function of the conductor.

The transmission function gives the probability that an electron with energy $E$ traverses the device from one lead to the other. Microscopically, it reflects the internal spectrum of the conductor, its coupling to the leads, and the electrostatic potential profile inside the device. In many nanoscale conductors this potential profile is itself modified by the applied bias, so that the transmission becomes explicitly bias dependent, $T(E,V)$ with $V \propto \mu_\alpha-\mu_\beta$~\cite{Nitzan2003,Nitzan2024}. Such bias-dependent transmission functions arise in a variety of experimentally accessible systems, including resonant tunnelling structures, molecular junctions, semiconductor quantum dots, and superconducting hybrid devices.

This property plays a central role in nonlinear transport. When the transmission peak shifts or is deformed as the bias increases, its overlap with the transport window defined by the Fermi functions may decrease at large driving. As a result, the current can stop increasing and may even decrease, giving rise to negative differential conductance.

\subsection{Electrostatic shift in a single-level quantum dot}

We now specialize to a single quantum dot described by one electronic level. In the absence of an applied bias the dot Hamiltonian reads
\begin{align}
H_0 = \epsilon_0\, d^\dagger d ,
\end{align}
where $d^\dagger$ and $d$ are the creation and annihilation operators of the dot level and $\epsilon_0$ is its bare energy.
When a bias is applied between the two leads, an electric field develops across the device. 
This field couples to the electronic charge on the dot through the electrostatic interaction. 
For a single localized level, this can be described by adding to the Hamiltonian the term
\begin{align}
H_{\mathrm{el}} = -e\, \mathbf{E}\cdot\mathbf{r}\; d^\dagger d ,
\end{align}
where $\mathbf{E}$ is the electric field generated by the potential drop between the leads and $\mathbf{r}$ denotes the position of the localized electronic state. If the electric field is approximately uniform across the dot region, the potential shift experienced by the level is proportional to the applied bias $V$. 
In this case the dot energy acquires a linear electrostatic shift, 
\begin{align}
\epsilon(V) = \epsilon_0 + \lambda V ,
\end{align}
where the coefficient $\lambda$ depends on the geometry of the device and on how the potential drop is distributed across the structure.
Expressing the bias in terms of the lead chemical potentials $V \propto \mu_\alpha-\mu_\beta$, the dot Hamiltonian can be written as
\begin{align}
H_{\mathrm{dot}} = \epsilon(V)\, d^\dagger d,
\qquad
\epsilon(V)=a+b(\mu_\alpha-\mu_\beta),
\end{align}
which corresponds to Eq.~\eqref{eq:level_energy} of the main text. 
The parameter $b$ therefore quantifies how strongly the resonant level follows the electric field generated by the applied bias.

\subsection{Lorentzian transmission from Green's functions}

Starting from the Hamiltonian introduced above, the transport properties of the system can be obtained using the Green's-function formalism. In particular, the retarded Green's function of the dot determines the transmission function entering the Landauer--B\"uttiker current.
For a single noninteracting level coupled to two leads, the retarded Green's function reads
\begin{align}
G^r(E)
=
\frac{1}{E-\epsilon(V)+i\Gamma/2},
\end{align}
where $\Gamma=\Gamma_\alpha+\Gamma_\beta$ characterizes the coupling to the leads. The advanced Green's function is the complex conjugate, $G^a(E)=\bigl(G^r(E)\bigr)^*$.
Within this framework the transmission function can be written as~\cite{Meir1992,Datta1997,Blasi2026}
\begin{align}
T(E,V)=\Gamma_\alpha\, G^r(E)\Gamma_\beta G^a(E),
\end{align}
which yields explicitly
\begin{align}
T(E,V)
=
\frac{\Gamma_\alpha\Gamma_\beta}
{(E-\epsilon(V))^2+(\Gamma/2)^2}.
\end{align}

We therefore recover the familiar Lorentzian transmission, whose center is located at the dot energy $\epsilon(V)$. Because $\epsilon(V)$ depends on the applied bias through the electrostatic shift, the transmission becomes explicitly bias-dependent.

\subsection{Weak-coupling limit}

Substituting the Lorentzian transmission into the Landauer expression gives
\begin{align}
I_{\alpha\beta}
=
\int \frac{dE}{2\pi}
\frac{\Gamma_\alpha\Gamma_\beta}
{(E-\epsilon(V))^2+(\Gamma/2)^2}
\left[
f_\alpha(E)-f_\beta(E)
\right].
\end{align}
We now consider the weak-coupling regime $\Gamma \ll k_B T$, in which the resonance is much narrower than the thermal broadening of the Fermi distributions~\cite{Blasi_PRR2024}. In this limit the Fermi functions vary slowly across the Lorentzian peak and can therefore be evaluated at the resonance energy $E=\epsilon(V)$. This gives
\begin{align}
I_{\alpha\beta}
\approx
\left[
f_\alpha(\epsilon(V))-f_\beta(\epsilon(V))
\right]
\int \frac{dE}{2\pi}
\frac{\Gamma_\alpha\Gamma_\beta}
{(E-\epsilon(V))^2+(\Gamma/2)^2}.
\end{align}
Using
\begin{align}
\int \frac{dE}{2\pi}
\frac{1}{(E-\epsilon)^2+(\Gamma/2)^2}
=
\frac{1}{\Gamma},
\end{align}
we obtain
\begin{align}
I_{\alpha\beta}
=
\frac{\Gamma_\alpha\Gamma_\beta}{\Gamma}
\left[
f_\alpha(\epsilon(V))-f_\beta(\epsilon(V))
\right].
\end{align}
Defining
$\gamma_{\alpha\beta}=\frac{\Gamma_\alpha\Gamma_\beta}{\Gamma}$, we finally recover
\begin{align}
I_{\alpha\beta}
=
\gamma_{\alpha\beta}
\bigl[
f_\alpha(\epsilon_{\alpha\beta})-f_\beta(\epsilon_{\alpha\beta})
\bigr],
\end{align}
which corresponds to Eq.~\eqref{eq:landauer} of the main text.

\section{Derivations for quantum dot networks.}
\label{app:qd_conductances}
This appendix provides the details of derivations of formulas used throughout Sec.~\ref{sec:quantum_dots}.

\subsection{Setup and notation}
Let us recall that the current through a quantum dot $(\alpha, \beta)$ can be expressed as
\begin{align}
I_{\alpha\beta} = \gamma \big[ f_\alpha(\epsilon) - f_\beta(\epsilon) \big],
\end{align}
where $f_\ell(E) = [1 + e^{\beta_\ell(E - \mu_\ell)}]^{-1}$ is the Fermi--Dirac distribution of reservoir $\ell$ at inverse temperature $\beta_\ell = 1/k_B T_\ell$, and the level energy depends on the reservoir potentials through the electrostatic shift, namely
\begin{align}
\epsilon_{\alpha \beta} = a_{\alpha \beta} + b_{\alpha \beta}\,(\mu_\alpha - \mu_\beta).
\end{align}
We suppress edge indices for readability and write $\gamma$ for $\gamma_{\alpha\beta}$, $b$ for $b_{\alpha\beta}$, etc. The chemical potentials enter the current in two ways: \emph{directly}, through the Fermi functions $f_\ell$, and \emph{indirectly}, through the level energy $\epsilon$. The interplay between these two channels determines the sign of the differential conductances and hence the cooperativity of the network.

\subsection{Useful identities}
\noindent The derivative of the Fermi function with respect to energy is
\begin{align}
\frac{\partial f_\ell}{\partial \epsilon} = -\beta_\ell\, \tilde{f}_\ell,
\end{align}
where we have introduced the thermal broadening factor
\begin{align}
\tilde{f}_\ell \equiv f_\ell(\epsilon)\big[1 - f_\ell(\epsilon)\big] \geq 0.
\end{align}
This quantity is maximal when the level is resonant with the chemical potential ($\epsilon = \mu_\ell$, giving $\tilde{f}_\ell = 1/4$) and vanishes exponentially when the level is far from resonance. The derivative with respect to the chemical potential at fixed energy is
\begin{align}
\frac{\partial f_\ell}{\partial \mu_\ell}\bigg|_{\epsilon} = \beta_\ell\, \tilde{f}_\ell,
\end{align}
which is non-negative: raising the chemical potential increases the occupation at any given energy. Note further the useful relation $\partial f_\ell / \partial \mu_\ell |_{\epsilon} = -\partial f_\ell / \partial {\epsilon}$. The derivatives of the energy level with respect to both potentials are given by
\begin{align}
\frac{\partial \epsilon}{\partial \mu_\alpha} = b, \qquad \frac{\partial \epsilon}{\partial \mu_\beta} = -b.
\end{align}

\subsection{Computing differential conductances}
\noindent By the chain rule applied directly to $f_\beta(\epsilon(\mu_\alpha, \mu_\beta))$ we obtain
\begin{align}
\frac{\partial f_\alpha(\epsilon)}{\partial \mu_\beta} & = \frac{\partial f_\alpha}{\partial \epsilon}\,\frac{\partial \epsilon}{\partial \mu_\beta} = (-\beta_\alpha \tilde{f}_\alpha)(-b) = b\,\beta_\alpha \tilde{f}_\alpha, \\
\frac{\partial f_\beta(\epsilon)}{\partial \mu_\beta} &= \frac{\partial f_\beta}{\partial \mu_\beta}\bigg|_\epsilon + \frac{\partial f_\beta}{\partial \epsilon}\,\frac{\partial \epsilon}{\partial \mu_\beta} = \beta_\beta \tilde{f}_\beta + (-\beta_\beta \tilde{f}_\beta)(-b) = (1+b)\,\beta_\beta \tilde{f}_\beta.
\end{align}
The total derivative of the current with respect to $\mu_\beta$ can be computed as
\begin{align}
\frac{\partial I}{\partial \mu_\beta} = \gamma\Big[b\,\beta_\alpha \tilde{f}_\alpha - (1+b)\,\beta_\beta \tilde{f}_\beta\Big].
\end{align}
Using the sign convention $G_{\alpha \beta} \equiv -\partial I_{\alpha} / \partial \mu_\beta$ introduced in Sec.~\ref{sec:model}, we obtain
\begin{align}
\label{eq:app_drain2}
G_{\alpha \beta} = \gamma\Big[(1+b)\,\beta_\beta \tilde{f}_\beta - b\,\beta_\alpha \tilde{f}_\alpha\Big].
\end{align}

\section{Derivations for enzymatic reaction networks}
\label{app:chem_conductances}

In this appendix, we compute the differential conductances for enzymatic reaction networks subject to substrate inhibition, as described in Sec.~\ref{sec:chemical}. For simplicity we work with concentrations throughout, and will relate the results to the chemical-potential conductances at the end.

The net flux through an enzyme connecting substrate $\alpha$ to product $\beta$ with substrate inhibition is given by Eq.~\eqref{eq:substrate_inhibition}. For readability, we again suppress the edge indices and write
\begin{align}
\label{eq:app_flux}
I = \frac{N}{D},
\end{align}
where we introduced the short-hand notation,
\begin{align}
\label{eq:app_ND}
N = \frac{V_f\, c_\alpha}{K_\alpha} - \frac{V_r\, c_\beta}{K_\beta}, \qquad D = 1 + \frac{c_\alpha}{K_\alpha} + \frac{c_\beta}{K_\beta} + \frac{c_\alpha^2}{K_\alpha K_i}.
\end{align}
We also compute the partial derivatives that will be used repeatedly, namely
\begin{align}
\label{eq:app_partials}
\frac{\partial N}{\partial c_\alpha} = \frac{V_f}{K_\alpha}, \qquad
\frac{\partial N}{\partial c_\beta} = -\frac{V_r}{K_\beta}, \qquad
\frac{\partial D}{\partial c_\alpha} = \frac{1}{K_\alpha} + \frac{2\, c_\alpha}{K_\alpha K_i}, \qquad
\frac{\partial D}{\partial c_\beta} = \frac{1}{K_\beta}.
\end{align}
Finally, the concentration derivatives of the flux can be expressed as
\begin{align}
\label{eq:app_quotient}
\frac{\partial I}{\partial c_\ell} = \frac{1}{D^2}\left(\frac{\partial N}{\partial c_\ell}\, D - N\, \frac{\partial D}{\partial c_\ell}\right), \qquad \ell \in \{\alpha, \beta\}.
\end{align}

\subsection{Computing current derivatives}
Let us now calculate the source (substrate) current derivative $\partial I / \partial c_{\alpha}$. Writing Eq. Eq.~\eqref{eq:app_quotient} for $\ell = \alpha$ gives us
\begin{align} \label{eq:app_quotient_a}
\frac{\partial I}{\partial c_\alpha}
  = \frac{1}{D^2}\left[\frac{\partial N}{\partial c_\ell}D
    - N \frac{\partial D}{\partial c_\alpha} \right].
\end{align}
Expanding the above using Eqs.~\eqref{eq:app_ND}--\eqref{eq:app_partials} leads to
\begin{align}
\frac{\partial N}{\partial c_\ell}D
  &= \frac{V_f}{K_\alpha}
    + \frac{V_f\, c_\alpha}{K_\alpha^2}
    + \frac{V_f\, c_\beta}{K_\alpha K_\beta}
    + \frac{V_f\, c_\alpha^2}{K_\alpha^2 K_i}, \\
N\,\frac{\partial D}{\partial c_\alpha}
  &= \frac{V_f\, c_\alpha}{K_\alpha^2}
    + \frac{2\, V_f\, c_\alpha^2}{K_\alpha^2 K_i}
    - \frac{V_r\, c_\beta}{K_\alpha K_\beta}
    - \frac{2\, V_r\, c_\alpha\, c_\beta}{K_\alpha K_\beta K_i}.
\end{align}
After simplifying Eq.~\eqref{eq:app_quotient_a} becomes
\begin{align}
\label{eq:app_source_general}
\frac{\partial I}{\partial c_\alpha}
  = \frac{1}{K_\alpha\, D^2}\left[
    V_f\!\left(1 + \frac{c_\beta}{K_\beta}\right)
    + \frac{V_r\, c_\beta}{K_\beta}\!\left(1 + \frac{2\, c_\alpha}{K_i}\right)
    - \frac{V_f\, c_\alpha^2}{K_\alpha K_i}
  \right].
\end{align}
This expression has two positive contributions (the first two terms) and one negative contribution (the last term), which arises directly from substrate inhibition. The positive terms reflect the standard catalytic response: increasing substrate drives more flux. The negative term reflects the inhibition: at high concentrations, excess substrate jams the enzyme.

The source conductance changes sign when the expression in square brackets in Eq.~\eqref{eq:app_source_general} vanishes. This is a quadratic equation in $c_\alpha$, that is
\begin{align}
\frac{V_f}{K_\alpha K_i}\, c_\alpha^2
  - \frac{2\, V_r\, c_\beta}{K_\beta K_i}\, c_\alpha
  - V_f\!\left(1 + \frac{c_\beta}{K_\beta}\right)
  - \frac{V_r\, c_\beta}{K_\beta}
  = 0.
\end{align}
Since the constant term is strictly negative, this quadratic has exactly one positive root, giving the threshold
\begin{align}
\label{eq:app_threshold}
c_\alpha^\star
  = \frac{V_r\, K_\alpha\, c_\beta}{V_f\, K_\beta}
    + \sqrt{
      \frac{V_r^2\, K_\alpha^2\, c_\beta^2}{V_f^2\, K_\beta^2}
      + K_\alpha K_i\!\left(1 + \frac{(V_f + V_r)\, c_\beta}{V_f\, K_\beta}\right)
    }.
\end{align}
For $c_\alpha < c_\alpha^\star$ the substrate conductance is positive (cooperative), and for $c_\alpha > c_\alpha^\star$ it turns negative (NDC regime). In the limit of small product concentration ($c_\beta \to 0$), this reduces to the simple expression $c_\alpha^\star = \sqrt{K_\alpha K_i}$ given in the main text.

We now compute the current derivatives $\partial I / \partial c_{\beta}$. Substituting Eqs.~\eqref{eq:app_ND}--\eqref{eq:app_partials} into Eq.~\eqref{eq:app_quotient} for $\ell = \beta$ gives
\begin{align}
\frac{\partial I}{\partial c_\beta}
 = \frac{1}{D^2}\left[-\frac{V_r}{K_\beta}\, D
    - N \cdot \frac{1}{K_\beta}\right]
  = -\frac{1}{K_\beta\, D^2}\Big[V_r\, D + N\Big].
\end{align}
Expanding the term $V_r\, D + N$ leads to
\begin{align}
V_r\, D + N
  &= V_r\!\left(1 + \frac{c_\alpha}{K_\alpha} + \frac{c_\beta}{K_\beta} + \frac{c_\alpha^2}{K_\alpha K_i}\right)
    + \frac{V_f\, c_\alpha}{K_\alpha} - \frac{V_r\, c_\beta}{K_\beta} \notag \\
  &= V_r\!\left(1 + \frac{c_\alpha}{K_\alpha} + \frac{c_\alpha^2}{K_\alpha K_i}\right)
    + \frac{V_f\, c_\alpha}{K_\alpha}.
\end{align}
The flux derrivative can therefore be written as
\begin{align}
\label{eq:app_drain}
-\frac{\partial I}{\partial c_\beta}
  = \frac{1}{K_\beta\, D^2}\left[
    V_r\!\left(1 + \frac{c_\alpha}{K_\alpha} + \frac{c_\alpha^2}{K_\alpha K_i}\right)
    + \frac{V_f\, c_\alpha}{K_\alpha}
  \right].
\end{align}
Since all kinetic parameters ($V_f, V_r, K_\alpha, K_\beta, K_i$) and concentrations ($c_\alpha, c_\beta$) are strictly positive, every term in the square bracket is positive. Hence $-\partial I / \partial c_\beta > 0$ for all physically accessible concentrations, and the drain side is unconditionally cooperative.

Physically, this is because increasing product shifts the reaction backward, providing a restoring feedback that stabilizes the product node. Unlike the substrate side, no inhibition mechanism counteracts this stabilizing response.
\end{document}